\documentclass[10pt,journal,cspaper,compsoc]{IEEEtran}
%

\usepackage{amsfonts}
\usepackage{algorithm}
\usepackage[footnotesize]{caption}
\usepackage[tight,footnotesize]{subfigure}


%

%
\ifCLASSOPTIONcompsoc
  \usepackage[nocompress]{cite}
\else
  \usepackage{cite}
\fi
%

%
\ifCLASSINFOpdf
  \usepackage[pdftex]{graphicx}
  \DeclareGraphicsExtensions{.pdf,.jpeg,.png}
\else
  \usepackage[dvips]{graphicx}
  \DeclareGraphicsExtensions{.eps}
\fi
%
%

%
\usepackage[cmex10]{amsmath}
%
\interdisplaylinepenalty=2500

%
\usepackage{algorithmic}
\usepackage{url}


\hyphenation{op-tical net-works semi-conduc-tor}

\begin{document}
%
\title{Effective Clipart Image Vectorization
       Through Direct Optimization of Bezigons}
%
%
%
%

\author{Ming~Yang,
        Hongyang~Chao,
        Chi~Zhang,
        Jun~Guo,
        Lu~Yuan,
        and~Jian~Sun
}

%
%

\markboth{IEEE Transactions on Visualization and Computer Graphics}%
{Yang \MakeLowercase{\textit{et al.}}: Effective Clipart Image Vectorization
Through Direct Optimization of Bezigons}
%


\IEEEcompsoctitleabstractindextext{%
\begin{abstract}
Bezigons, i.e., closed paths composed of B\'ezier curves, have been widely
employed to describe shapes in image vectorization results. However, most
existing vectorization techniques infer the bezigons by simply approximating an
intermediate vector representation (such as polygons). Consequently, the
resultant bezigons are sometimes imperfect due to accumulated errors, fitting
ambiguities, and a lack of curve priors, especially for low-resolution images.
In this paper, we describe a novel method for vectorizing clipart images. In
contrast to previous methods, we directly optimize the bezigons rather than
using other intermediate representations; therefore, the resultant bezigons are
not only of higher fidelity compared with the original raster image but also
more reasonable because they were traced by a proficient expert. To enable such
optimization, we have overcome several challenges and have devised a
differentiable data energy as well as several curve-based prior terms. To
improve the efficiency of the optimization, we also take advantage of the local
control property of bezigons and adopt an overlapped piecewise optimization
strategy. The experimental results show that our method outperforms both the
current state-of-the-art method and commonly used commercial software in terms
of bezigon quality.
\end{abstract}

\begin{keywords}
clipart vectorization, clipart tracing, bezigon optimization
\end{keywords}}

\maketitle

\IEEEdisplaynotcompsoctitleabstractindextext

%
\IEEEpeerreviewmaketitle

\section{Introduction}
\label{sec:introduction}
%
%

%
%
%
%
\IEEEPARstart{I}{mage} vectorization, also known as image tracing, is the
process of converting a bitmap image into a vector image. There are various
types of vectorization. In the present work, we focus on clipart image
vectorization. In such a case, the input raster is a clipart image, which is generally composed exclusively of digital illustrations like cartoons, logos, and symbols. Notably, this kind of images do not include photographs or scans of real hand-made drawings.

There is a huge demand for such a conversion technique. According to a survey
from \cite{diebel2008bayesian}, more than 7 million man hours are spent on
vectorizing images in the United States every year, and approximately 60\% of
the more than 10 million images to be vectorized are clipart images such as
logos and other rasterized vector art. As further evidence of the large demand
for clipart image vectorization, there is also a large market for online
services that specialize in tracing clipart images. The conversion can be
manually performed, but this may require a substantial amount of time and
effort, particularly for those users who are not proficient in tracing images.
This situation provides strong motivation for the development of an automated
algorithm for precise vectorization.

Notably, most modern methods that are appropriate for vectorizing clipart
images \cite{selinger2003potrace, weber2004autotrace, diebel2008bayesian,
zhang2009vectorizing} use bezigons to represent the resultant vector contours,
which has become the standard because of the compactness and editability of
bezigons.

However, almost no existing methods are specialized for directly obtaining
bezigons. Such methods typically direct most of their effort toward the
generation of intermediate polygons (Figure~\ref{fig:traditional-pipeline}b)
and consequently estimate bezigons (Figure~\ref{fig:traditional-pipeline}c)
that reproduce these polygons rather than the original image
\cite{selinger2003potrace,diebel2008bayesian,zhang2009vectorizing}. Among
these methods, \cite{diebel2008bayesian} (also known as Vector
Magic~\cite{vm2010}) generally produces the most accurate bezigon boundaries.
\footnote{In the case of vectorizing cartoon images with non-uniform color
regions or decorative lines,
  \cite{zhang2009vectorizing} may generate superior bezigons.} The key
to Vector Magic's success is that an effective and differentiable polygon-based
rasterization function was found, allowing polygon parameters to be precisely
optimized based on this function and polygon-specific priors. Nevertheless,
even this state-of-the-art method may still result in low-quality vectorized
bezigons (Figure~\ref{fig:traditional-pipeline}c), and other existing methods
are much more susceptible to such problems. There are three reasons for this
issue. First, errors introduced in the polygon estimation stage cannot  be
effectively corrected in the curve-fitting stage without observation of the
raster input. Second, even if the estimated polygons are perfect, ambiguities
still exist in the curve-fitting stage because of the nature of data
approximation. Third, bezigon-based priors have not yet been fully developed.
In short, generating bezigons in such an indirect manner may have a substantial
negative effect on the accuracy of the bezigon boundaries. This poses a serious
problem for clipart vectorization because even a slightly improper or
irrational boundary can be identified as a significant artifact in a clipart
image.

\begin{figure}[!t]
  \centering
  \subfigure[] {\includegraphics[height=1.1in]{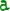}}
  \subfigure[] {\includegraphics[height=1.1in]{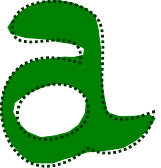}}
  \subfigure[] {\includegraphics[height=1.1in]{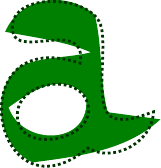}}
  \caption{Traditional pipeline of clipart vectorization. The dotted curves
  represent ground-truth outlines. (a) Raster input. (b) Intermediate
  representation (green polygons). (c) Final vector result (green bezigons).}
  \label{fig:traditional-pipeline}
\end{figure}

To solve the problems summarized above while retaining the advantages of the
state-of-the-art method~\cite{diebel2008bayesian}, an intuitive approach is to
devise an effective optimization mechanism that is specific to bezigons.

\begin{figure}[!t]
  \centering
  \subfigure[] {\includegraphics[width=1.6in]{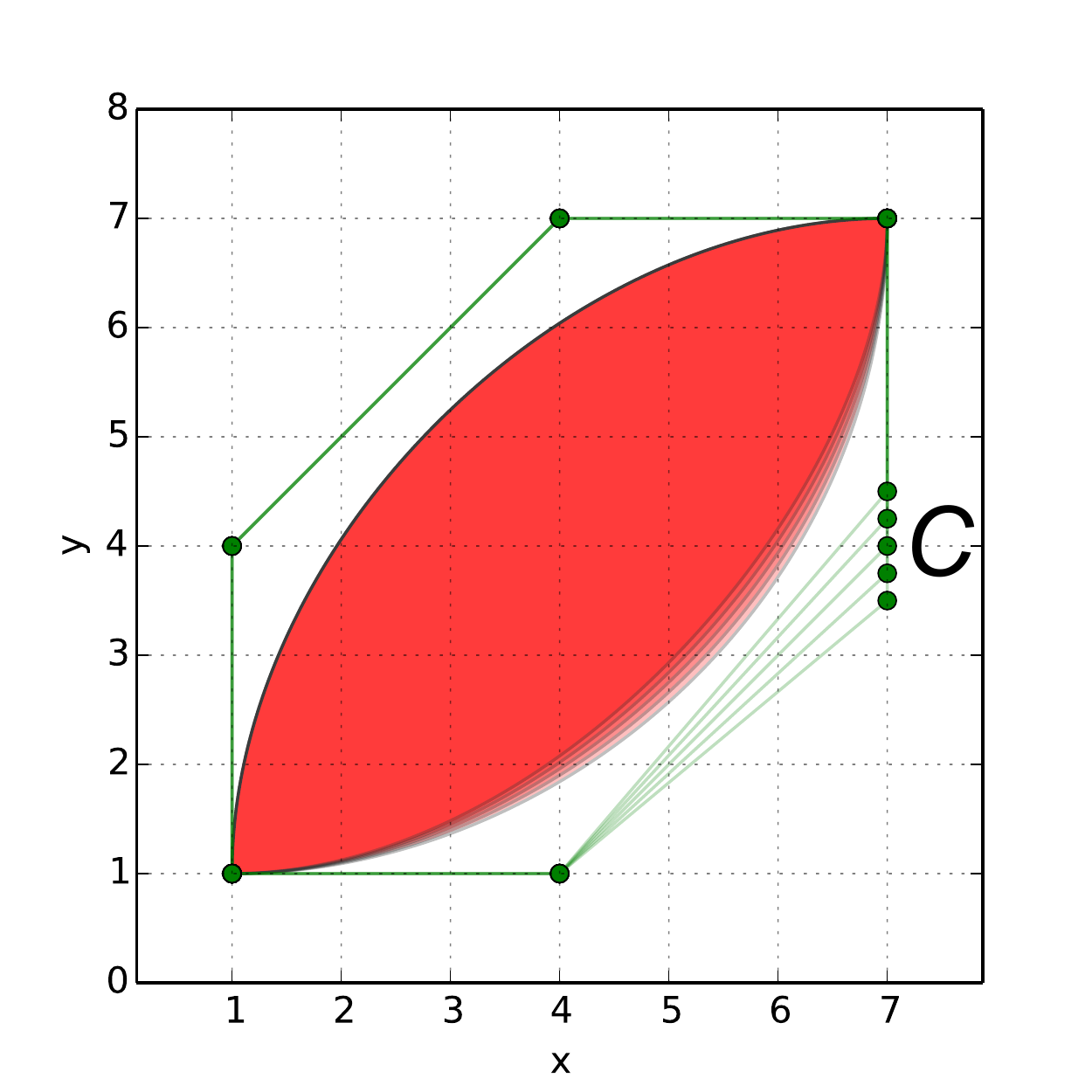}}
  \subfigure[] {\includegraphics[width=1.6in]{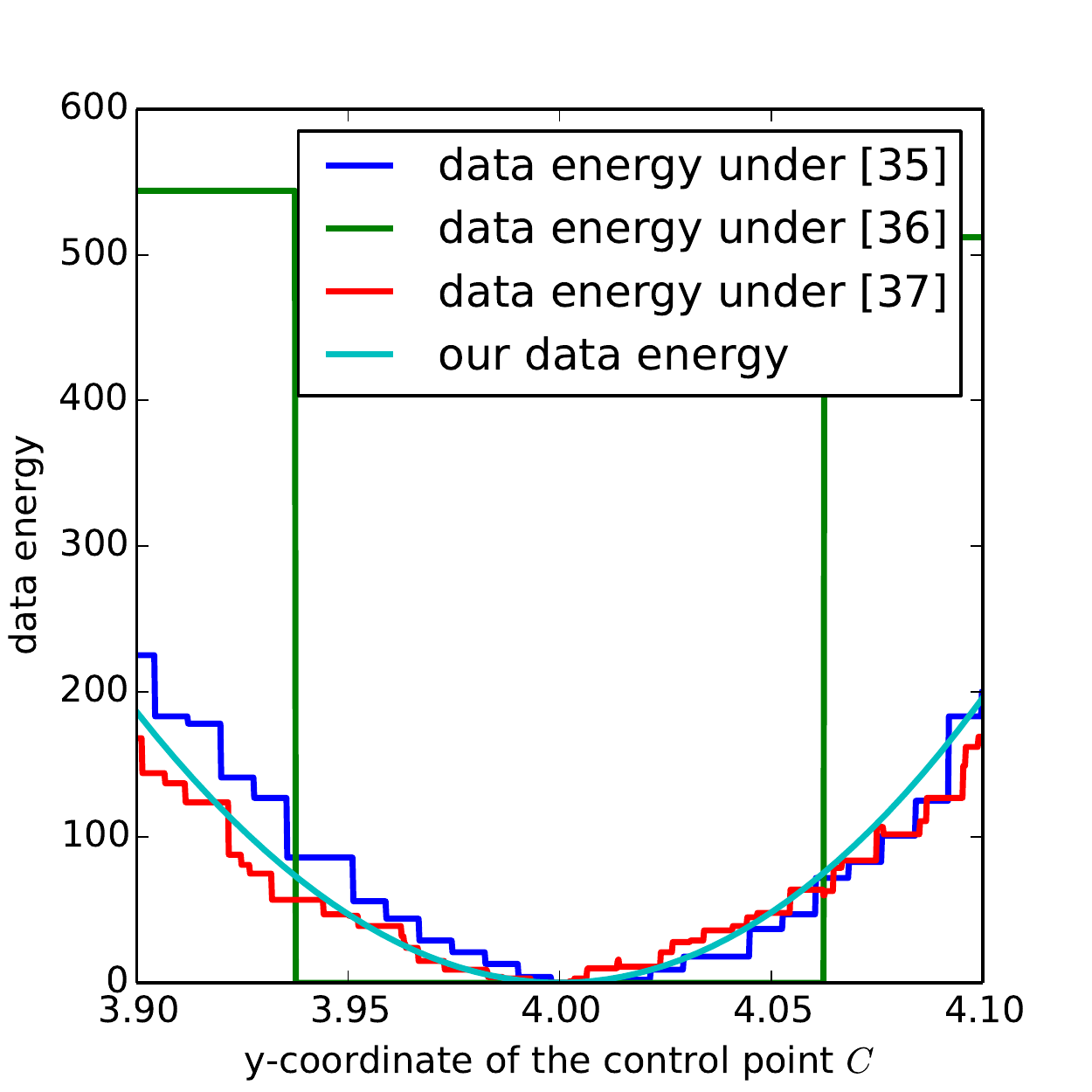}}
  \caption{Continuity of various candidate data energy functions. (a) Variation
  of a bezigon with the y coordinate of its control point $C$. (b) Variation of
  the data energy with the y coordinate of $C$ under various rasterization
  functions.}
  \label{fig:data-energy-comparison}
\end{figure}

However, establishing such a framework is non-trivial. In general, a direct
optimization of bezigons would necessitate an appropriate rasterization
function specialized for bezigons because such a function defines the bezigons'
fidelity to the raster image and serves as a fundamental basis for the entire
bezigon-specific optimization mechanism. However, most available rasterization
functions are not suited to this purpose because commonly used bezigon
rasterization methods are typically based on sampling sub-pixel locations of
the pixel grid~\footnote{In polygon-specific rasterizers, the bezigon is
approximated by a polygon before actual rasterization.}; the functions used in
these methods are non-differentiable, contain many discontinuities, and are
piecewise flat (have zero gradient with respect to the bezigon parameters)
almost everywhere (as shown in Figure~\ref{fig:data-energy-comparison}). These
properties impose a serious limitation on the effectiveness and efficiency of
the optimization procedure. Consequently, searching for a suitable bezigon-specific
rasterization function is the first challenge and the foremost problem that
must be overcome.

Even if this first challenge is overcome, the solution space might remain large
and contain many unreasonable bezigons that give rise to nearly the same raster
image (Figure~\ref{fig:toy-failcase} illustrates examples of such illegal
cases).  We observe that reasonable bezigons, when serving as vector
primitives, occupy only a small fraction of the parameter space of general
bezigons. There should be specific prior knowledge available regarding the
bezigons in typical vector images, and it is essential to incorporate such
prior knowledge to resolve the ambiguities and further constrain the solution
space. Unfortunately, little academic attention has been directly focused on
such prior knowledge; the available curve priors suggested in the literature
either cannot be directly applied for bezigons~\cite{diebel2008bayesian} or are
not specialized for vectorization~\cite{blake1998active}. Therefore, studying
the characteristics of both reasonable and unreasonable bezigons for image
vectorization, and incorporating closely related prior knowledge into our
bezigon optimization, is another challenge to be addressed.

\begin{figure}[!t]
\centering
  \subfigure[] {\includegraphics[width=0.7in]{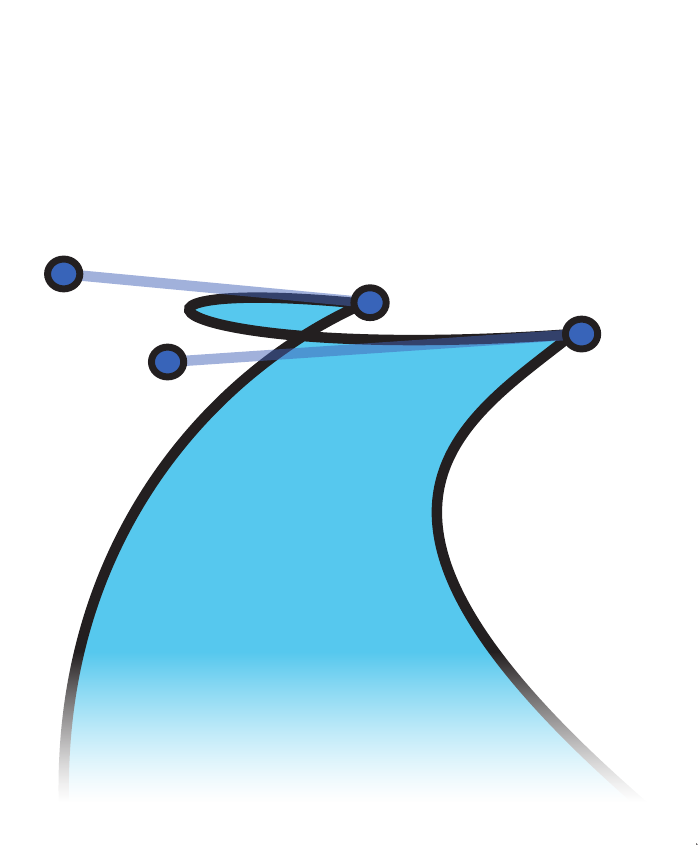}}
  \subfigure[] {\includegraphics[width=0.7in]{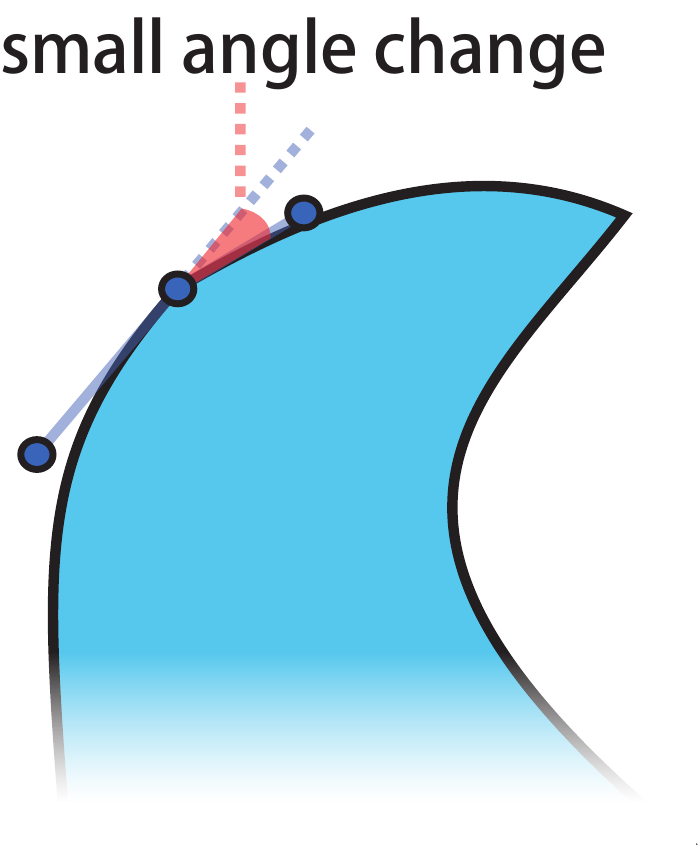}}
  \subfigure[] {\includegraphics[width=0.7in]{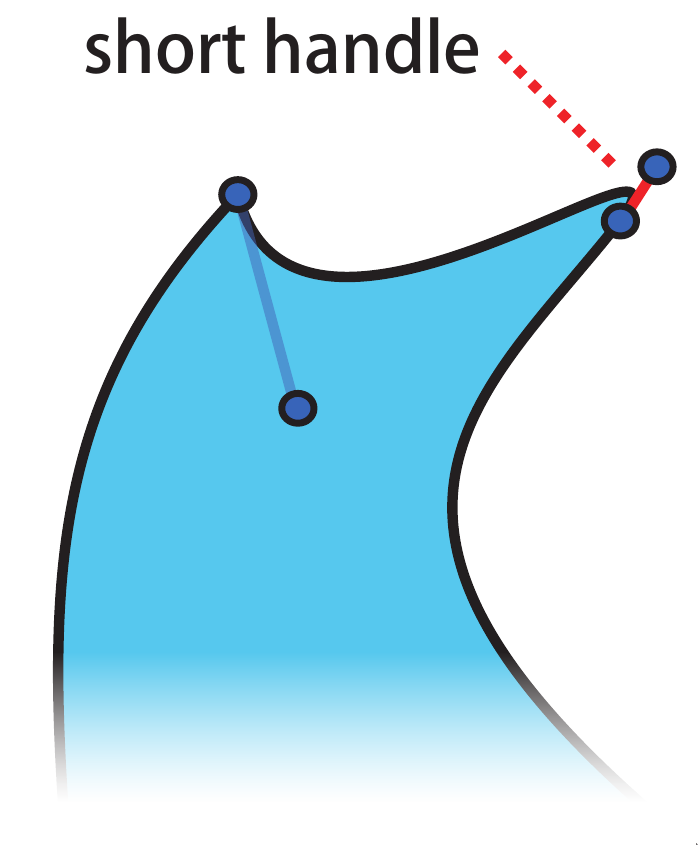}}
  \subfigure[] {\includegraphics[width=0.7in]{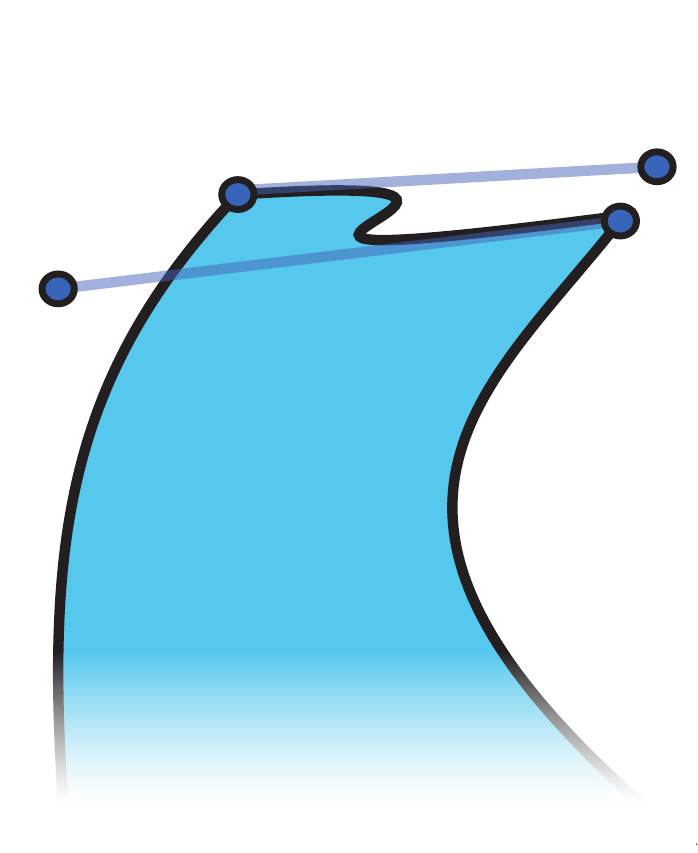}}
  \caption{Four types of failure cases that occur when only data energy is
  considered. (a) Self-intersection. (b) False corners with small angle
  variations. (c) Short handle. (d) Twisted section.}
  \label{fig:toy-failcase}
\end{figure}

In this paper, we present solutions to the above challenges and propose a
bezigon-specific optimization framework for more precise clipart vectorization.
Our main contributions are as follows:
\begin{itemize}
\item By analyzing several rasterization approaches, we identify an
  appropriate rasterization function, theoretically prove certain analytic
  properties thereof that facilitate effective optimization for our purposes
  (using the theory of \emph{generalized functions}
  \cite{gel1968generalized}), and experimentally validate its
  compatibility and robustness for vectorizing various types of clipart images.
  Thus, we establish a framework for clipart vectorization via the direct
  optimization of bezigons. Meanwhile, we provide some approximate criteria for
  determining whether a rasterization function is suitable for optimization-
  based image vectorization.
\item Based on an intensive study of reasonable bezigons in typical
  vector images as well as unreasonable bezigons arising from experiments, we
  classify the common illegal cases of bezigon primitives into four categories:
  self-intersections, false corners with small angle variations, short handles,
  and twisted sections (Figure~\ref{fig:toy-failcase}). To address these
  illegal cases, we suggest a self-intersection prior term, an angle-variation
  prior term, a B\'{e}zier-handle prior term, and a curve-length prior term.
  All these terms are incorporated into our framework to further constrain the
  solution space and to provide broadly reasonable guidance for bezigon
  optimization. Moreover, errors in the curve boundaries, if any remain, become
  visually insignificant because the resultant bezigons are more reasonable and
  aesthetically pleasing in general.
\item By taking full advantage of the local control property of
  B\'ezier curves, we propose a piecewise optimization strategy to effectively
  solve the problem of bezigon optimization. This strategy considerably reduces
  the computational cost and makes our vectorization method more practical.
\item Based on the above techniques, we suggest a new bezigon
  optimization framework. In such a framework, we can effectively vectorize a
  clipart image or refine vector results obtained using other approaches.
  Notably, such a framework is generally capable of incorporating any bezigon
  rasterization model and additional prior knowledge for the purpose of image
  vectorization or other applications, such as curve stylization.
\end{itemize}

The experimental results show that our method outperforms both the current
state-of-the-art method and commonly used commercial software in terms of
bezigon quality, especially in tough vectorization cases such as smooth
boundaries with high curvatures, obtuse corners, and slightly bent edges.

The remainder of this paper is organized as follows:
Section~\ref{sec:related-work} briefly reviews existing clipart vectorization
approaches. Section~\ref{sec:overview} formulates the problem of clipart
vectorization in terms of bezigon optimization. An overview of the proposed
vectorization framework and our points of focus is also provided in this
section. Section~\ref{sec:directly-optimizing-bezigon} fully explains our
approach to the direct optimization of bezigons for image vectorization. The
experimental results and comparisons are presented in
Section~\ref{sec:experiments}, and the paper concludes with a discussion of
further perspectives on this work in Section~\ref{sec:conclusion}.

\section{Related Work}
\label{sec:related-work}

Various other types of image vectorization methods exist that are specific to
line drawings
\cite{fahn1988topology, nagasamy1990engineering, hori1993raster, chiang1995new,
dori1999sparse, hilaire2006robust, noris2013topology}, natural images
\cite{swaminarayan2006rapid, lecot2006ardeco, xia2009patch, price2006object,
sun2007image, lai2009automatic, orzan2013diffusion, jeschke2009rendering,
olsen2011image}, and pixel art \cite{kopf2011depixelizing}. However, these
methods merely capture the intrinsic nature of clipart images and are likely to
fail in generating precise curve boundaries; thus, they are not well suited for
the task considered here.

In the last decade, several methods \cite{sykora2005sketching,sykora2005video,
diebel2008bayesian, zhang2009vectorizing, selinger2003potrace} have been
proposed for clipart image vectorization. These methods typically involve
segmenting the input image into a set of regions and inferring the color and
the boundary location for each region.

To overcome the poor quality of the segmentation that results from general
image vectorization, \cite{sykora2005sketching} exploited a visual feature
of certain types of cartoons, i.e., shapes that are typically bounded by bold
dark contours, and succeeded in producing a more precise segmentation technique
for clipart images. However, this approach could only address regions enclosed
by such strokes, which is not always the case in modern clipart images.

To further improve the segmentation and more semantically infer the shape
color, \cite{zhang2009vectorizing} proposed a novel trapped-ball segmentation
method that can segment a clipart image more semantically even when some
regions are non-uniformly colored. Moreover, this approach considers temporal
coherence and is capable of vectorizing cartoon animations. Such progress is
impressive, but segmentation, color estimation and vectorizing animations are
not our topics of focus.

Perhaps the most difficult aspect of image vectorization still lies in the
inference of boundary locations. As previously mentioned,
\cite{diebel2008bayesian} is the state-of-the-art vectorization
algorithm with respect to its precision of boundary location, especially for
the vectorization of uniformly colored shapes. However, the contour
optimization of this method, which plays the most important role in the
algorithm, is specialized for polygons rather than bezigons and hence
occasionally results in inaccurate bezigons. It seems that extending this
method's approach to curve fitting by somehow managing to fully use the
information provided by the raster input might solve the problem. However, this
is a non-trivial task for the reasons mentioned in
Section~\ref{sec:introduction}. Moreover, this process would result in a
bezigon optimization problem similar to ours.

In addition to the academic literature, there are also a number of related
commercial tools, such as Adobe Illustrator~\cite{adobe2013}, Corel
CorelDRAW~\cite{corel2013} and Vector Magic~\cite{vm2010} (a product based on
the technology of \cite{diebel2008bayesian}), as well as open-source projects
such as Potrace \cite{selinger2003potrace} and AutoTrace
\cite{weber2004autotrace}. Of these tools, Adobe Illustrator is the most
representative, and Vector Magic achieves the best results in terms of bezigon
boundary precision. In this paper, we compare our algorithm with these two
software packages. Although the technical details of most commercial tools are
unavailable, the experimental results indicate that these tools exhibit a
problem similar to (or even worse than) that of
\cite{diebel2008bayesian, vm2010}.

In summary, insufficient precision in identifying bezigon boundaries is the
most common shortcoming of existing vectorization methods. Therefore, improving
the precision of bezigon boundaries, which is important for vectorizing clipart
images, is the primary goal of this paper.

\section{Problem Formulation and Overview of Our Framework}
\label{sec:overview}

To facilitate a better understanding of this paper, in this section, we
formulate the related problem along with the relevant notation and then provide
an overview of the proposed vectorization framework and our topics of interest.

For the sake of simplicity, we consider only a single bezigon. Our work can
easily be extended to situations that involve two or more bezigons because each
bezigon can be independently vectorized.

\subsection{Problem Formulation}

Given a raster image, the primary task of clipart image vectorization is to
infer a bezigon from the raster input. In a typical vector image, a bezigon can
be completely determined by its \emph{geometric parameters} and its \emph{color
parameters}.

\textbf{Geometric parameters.} As previously mentioned, a bezigon $S(t)$ is
simply a series of B\'ezier curves joined end to end, i.e.,
\begin{equation}
S(t)=
\begin{cases}
B_1(t), & t \in [0,1], \\
B_2(t-1), & t \in [1,2], \\
\qquad \qquad \qquad \vdots \\
B_N(t-N+1), & t \in [N-1,N].
\end{cases}
\end{equation}

Here, $N$ denotes the number of curves in the bezigon, and $B_j(t)$ represents
the $j$-th curve, which is assumed without loss of generality to be a 2D cubic
B\'ezier curve with the following parametric form:
\begin{equation}
\label{eq:X_j-Y_j}
\begin{array}{rll}
X_j(B_x;t) &=& \sum_{i=0}^3 {\binom{3}{i}}(1 - t)^{3 - i}t^i x_{j,i+1}, \\
Y_j(B_y;t) &=& \sum_{i=0}^3 {\binom{3}{i}}(1 - t)^{3 - i}t^i y_{j,i+1}, \\
j &=& 1,2,\hdots,N, \\
t &\in& [0,1],
\end{array}
\end{equation}
where the $(x_{j,i},y_{j,i}) \in \mathbb{R}^2 (i=1,2,3,4; j=1,2,\hdots,N)$
constitute the four control points of the $j$-th B\'ezier curve. The last
control point of one curve coincides with the starting point of the next curve,
i.e., $x_{j,4}=x_{j+1,1},y_{j,4}=y_{j+1,1} (j=1,2,\hdots,N)$. Thus, all
geometric parameters $B$ of a bezigon can be represented as
\begin{equation}
\label{eq:B_x-B_y}
\begin{array}{c}
B=(B_x,B_y) \\
B_x \!\!=\!\! \begin{pmatrix}
x_{1,1} \!\! & \!\! x_{1,1} \!\! & \!\! x_{1,1} \\
x_{2,1} \!\! & \!\! x_{2,1} \!\! & \!\! x_{2,1} \\
\vdots \!\! & \!\! \vdots \!\! & \!\! \vdots \\
x_{N,1} \!\! & \!\! x_{N,1} \!\! & \!\! x_{N,1}
\end{pmatrix},
B_y \!\!=\!\! \begin{pmatrix}
y_{1,1} \!\! & \!\! y_{1,1} \!\! & \!\! y_{1,1} \\
y_{2,1} \!\! & \!\! y_{2,1} \!\! & \!\! y_{2,1} \\
\vdots \!\! & \!\! \vdots \!\! & \!\! \vdots \\
y_{N,1} \!\! & \!\! y_{N,1} \!\! & \!\! y_{N,1}
\end{pmatrix}.
\end{array}
\end{equation}

\textbf{Color parameters.} Without loss of generality, we consider that the
color of the bezigon at pixel $(x,y)$ is represented by the function
$c(C;x,y)$. If the region color is assumed to be uniform, then $c(C;x,y)=C_0$,
and the color parameter is $C_0=(r_0,g_0,b_0) \in \mathbb{R}^3$. If a quadratic
color model is assumed, then $c(C;x,y)=C_0+C_1 x+C_2 y+C_3 x^2+C_4 xy+C_5 y^2$;
thus, the color parameters are $C=(C_0,C_1,C_2,C_3,C_4,C_5) \in
\mathbb{R}^{18}$.

Now, for a given raster input image $I$, our objective can be considered to be
the inference of the parameters
\begin{equation}
W=(B,C)
\end{equation}
from $I$ such that the bezigon that is defined by $W$ can explain the input
image $I$. In other words, the raster image of the bezigon should be similar to
the input image. The problem is obviously a typical non-linear and ill-posed
problem because there may be many possible solutions because of uncertainties
in the imaging process and ambiguities of visual interpretation. To resolve the
intrinsic ill-posedness of the problem, we must further constrain the solution
space by introducing additional prior knowledge regarding bezigons in vector
images.

Based on the above discussion, we will adopt an energy minimization approach
that is widely used in many computer vision algorithms
\cite{treiber2013optimization}.

We first define our energy function as
\begin{equation}
\label{eq:E}
E(W;I)=E_{data}(W;I)+E_{prior}(W),
\end{equation}
where $E_{data}(W;I)$ is the so-called \emph{data energy}, which measures the
fidelity of a vector solution to the observed raster image, and $E_{prior}(W)$
is the so-called \emph{prior energy}, which is the formulation of our
constraints or prior knowledge regarding reasonable bezigons for the above-
mentioned vector images.

Consequently, the problem of this paper will be formulated in terms of
identifying the optimal bezigon $W^*$ such that
\begin{equation}
\label{eq:w*}
W^*=\arg \min_{W \in \Omega}⁡(E(W;I)).
\end{equation}

\subsection{Overview of Our Framework for Optimization}

Once our energy function is fully specified, the entire energy minimization
framework can be divided into two phases: a bezigon initialization phase and a
bezigon-specific optimization phase (Figure~\ref{fig:overview}).

\begin{figure}[!t]
\centering
\includegraphics[width=3.3in]{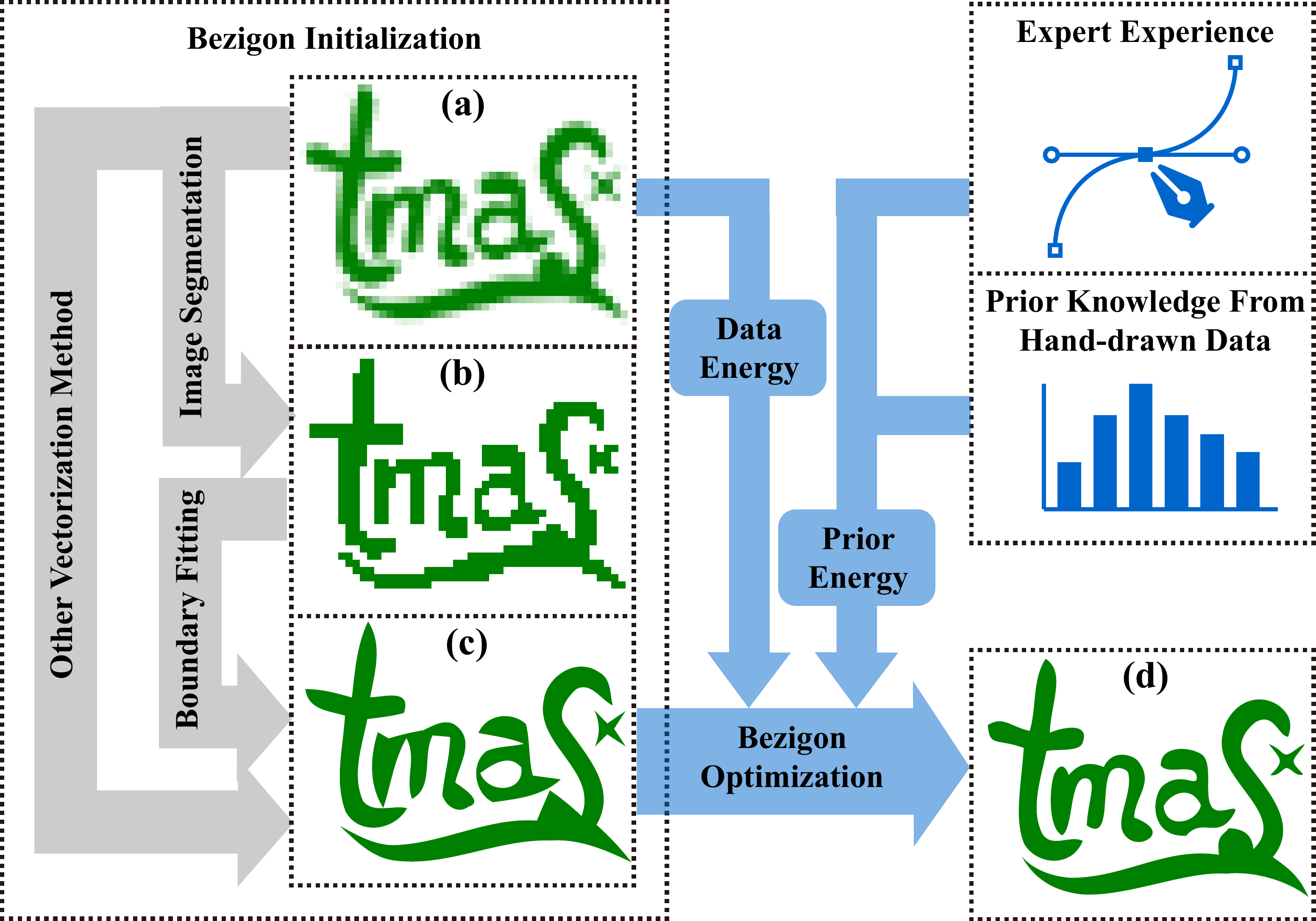}
\caption{Overview of our framework. (a) Raster input. (b) Segmentation result.
(c) Initial bezigons. (d) Optimized bezigons.}
\label{fig:overview}
\end{figure}

\textbf{Bezigon initialization phase.} The initialization phase takes a raster
image (Figure~\ref{fig:overview}a) as input and outputs a set of initial
bezigons (Figure~\ref{fig:overview}c). These bezigons can be either obtained
using other existing vectorization methods or extracted from the input image. A
simple, fully automated method of accomplishing this extraction consists of two
steps: a segmentation step that is used to segment the input image into a set
of regions
\cite{felzenszwalb2004efficient} (Figure~\ref{fig:overview}b) and a
boundary-fitting step to fit a piecewise cubic B\'ezier curve to the boundary
of each region \cite{schneider1990algorithm} (Figure~\ref{fig:overview}c). As
another option, the initial bezigons can also be manually drawn or
interactively refined by the user. Regardless of which method is used, the
obtained bezigons serve as initial parameters in the next phase; hence, they
are not necessary highly accurate. The technical details of this phase are
outside of the scope of this paper.

\textbf{Bezigon optimization phase.} The optimization phase is the primary
concern of this paper. This phase includes direct bezigon optimization, which
is the task that we are emphasizing. This process takes the initial bezigons
from the first phase as input and outputs the optimal bezigons as the final
vector result. In contrast to other existing vectorization approaches, this
phase of our framework consists of neither simply applying a curve-fitting
algorithm (e.g., \cite{schneider1990algorithm}) nor indirectly optimizing
bezigons according to an intermediate representation (e.g., polygons in
\cite{diebel2008bayesian}). Instead, we optimize the bezigon parameters by
directly observing the raster input and incorporating both the image-tracing
experience of experts and prior knowledge from existing hand-drawn vector
images. The bezigon optimization and these sources of information are
simultaneously bridged by our data energy and prior energy. In this way,
unnecessary accumulated errors introduced by the intermediate process can be
avoided, and hence, the quality of the resultant bezigon can be improved.
However, as stated previously, there are several as- yet-unresolved challenges
arising from such an optimization approach. Therefore, our paper will emphasize
these issues. Section~\ref{sec:directly-optimizing-bezigon} presents a
discussion of our solution method and explains our main contributions.

There are three major advantages to our framework. First, the error arising
from the entire vectorization pipeline can be minimized. Second, any bezigon-
based priors can be conveniently incorporated to generate even more reasonable
results, once we have a better understanding of bezigons in typical vector
images, or to cause the resultant bezigons to satisfy certain constraints of
other specific applications. Third, our vectorization approach behaves
similarly to bezigon evolution, which is particularly well suited to the
vectorization of clipart animations, and facilitates the further refinement of
inaccurate bezigons resulting from other vectorization approaches.

\section{Approach for Directly Optimizing Bezigons}
\label{sec:directly-optimizing-bezigon}

In this section, we solve some key issues related to bezigon
optimization. The optimization involves specifying the data energy with the
proper rasterization model (Section~\ref{subsec:data-energy}) and
several bezigon-specific prior terms
(Section~\ref{subsec:prior-energy}). To more efficiently solve
Equation~\ref{eq:w*}, we also explore the nature of bezigon parameters
and propose a piecewise optimization strategy
(Section~\ref{subsec:piecewise}). In the following, we will use the
same notations as are used in Section~\ref{sec:overview}.

\subsection{Data Energy}
\label{subsec:data-energy}

To fully utilize the information provided by the input image, we define the
data energy as the distance between the input image $I$ and the image generated
by rasterizing a vector solution $W$:
\begin{eqnarray}
E_{data}(W;I) \!\!\! &=& \!\!\! \frac{1}{l_0} \|R(W)-I\|^2 \\
              \!\!\! &=& \!\!\! \frac{1}{l_0} \sum_{(x,y) \in \Lambda}
              \|R(W;x,y)-I(x,y)\|^2.
\label{eq:data}
\end{eqnarray}
Here, the function $R(W)$ models a specific bezigon rasterization process. The
function takes the parameters W of a bezigon as input and produces a raster
image of the same size as the input image $I$. $R(W;x,y)$ and $I(x,y)$ denote
the values at pixel $(x,y)$ in the rasterized image given by $R(W)$ and in the
input image, respectively. $\Lambda$ is the lattice of the input image $I$. The
denominator $l_0$ represents the arc length of the initial bezigon. This
denominator is fixed during the optimization and can be easily estimated from
the geometric parameters $B^0=(B_x^0,B_y^0)$ of the initial bezigon, i.e.,
$$l_0 = \sum_{j=1}^N \int_0^1 \sqrt{\left(\frac{dX_j(B_x^0,t)}{dt}\right)^2
      + \left(\frac{dY_j(B_y^0,t)}{dt}\right)^2} dt.$$

Two issues now arise for consideration. First, a bezigon rasterization function
for $R(W;x,y)$ should be specified because such a function is essential to make
Equation~\ref{eq:data} suitable for optimization. It is also one of the most
challenging aspects of direct bezigon optimization. As previously mentioned,
the most important contribution of the current state-of-the-art approach
\cite{diebel2008bayesian} also lies in finding an appropriate rasterization
function, but one that is specific to polygon optimization. For bezigon
optimization, however, research concerning suitable rasterization functions is
still lacking in the existing literature. Second, we must address the case in
which the input image is not generated by the specified rasterization function
used for the data energy because the rasterization method that generates the
given input image is generally unknown and most likely not the same as our
function.

Regarding the first issue,  several methods exist for directly or indirectly
rasterizing bezigons~\cite{catmull1978hidden, duff1989polygon,
doan2004antialiased, manson2011wavelet, manson2013analytic, cairo2013,
batik2002, agg2006}, each of which corresponds to a candidate rasterization
function $R(W;x,y)$. However, we find that nearly all such functions yield poor
results when a typical solver for nonlinear optimization (such as conjugate
gradient, l-BFGS, or NEWUOA) is applied. This is because most available
rasterization functions are either piecewise flat, or discontinuous, almost
everywhere (as shown in Figure~\ref{fig:data-energy-comparison}b). Although
such discontinuities pose no problems for common rasterization tasks, they can
strongly degrade the effectiveness or efficiency of optimization. Various
specific optimization approaches (such as \cite{ermoliev1995minimization}) can
be applied in the case of discontinuous functions. However, our experimental
results indicate that such approaches often fail to produce satisfactory
bezigons. Moreover, these solvers are relatively slow, which limits their use
in image vectorization. Based on the above experiments and analysis, we
recognize that an appropriate rasterization function should exhibit certain
properties, such as continuity with respect to the bezigon parameters.
Moreover, if the rasterization function is also differentiable with respect to
those parameters, more efficient and effective solvers can be adopted to
optimize our energy function to obtain better results.

In the search for proper rasterization approaches, the approach presented in
\cite{manson2011wavelet} came to our attention. This approach uses a
hierarchical Haar wavelet representation to analytically calculate an
anti-aliased raster image of bezigons. According to \cite{manson2011wavelet},
for a bezigon $W$, the pixel color value at $(x,y)$ in the resultant raster
image can be calculated as follows:
\begin{equation}
\label{eq:R_MS}
\begin{split}
&R_{MS}(W;x,y) = \\
&c(C;x,y) \sum_{j=1}^N
        \left\{\begin{split}
        \sum_{k \in K} c_{0,k}^{(0,0)} (B;j) \psi_{0,k}^{(0,0)} (x,y) \\
        + \sum_{s=0}^d \sum_{k \in K}
                \left[\begin{split}
                c_{s,k}^{(0,1)} (B;j) \psi_{s,k}^{(0,1)} (x,y) \\
                +c_{s,k}^{(1,0)} (B;j) \psi_{s,k}^{(1,0)} (x,y) \\
                +c_{s,k}^{(1,1)} (B;j) \psi_{s,k}^{(1,1)} (x,y)
                \end{split}\right]
        \end{split}\right\}
\end{split}
\end{equation}

Here, $s$ represents a specific scaling from the original resolution to the
pixel solution $d$, and $k=(k_x,k_y)$, $k_x \in K_x \subset \mathbb{Z}$, $k_y
\in K_y \subset \mathbb{Z}$ represents a specific translation in the finite set
$K=K_x \times K_y \subset \mathbb{Z}^2$ corresponding to all possible
translations in the current scaling. $\psi_{s,k}^{(\cdot)} (x,y)$ and
$c_{s,k}^{(\cdot)} (B;j)$ are a two- dimensional Haar wavelet basis function
and its coefficient, respectively. The definitions of these two functions can be
found in Appendix~\ref{subsec:appendix-definitions}.

Although \cite{manson2011wavelet} provides a closed-form solution for
rasterizing bezigons, the continuity and differentiability of $R_{MS} (W;x,y)$
are not obvious because of the discontinuity of the Haar wavelet basis
functions. One of the most important tasks of this section is to present the
proofs of the continuity and differentiability of this rasterization function.
The latter is not straightforward. To obtain the proof, we must rely on several
properties and operations from the theory of \emph{generalized functions}
\cite{gel1968generalized}.

Note that for any given coordinate $(x,y)$, $R_{MS} (W;x,y)$ is a function of
the bezigon parameters $W$. To establish the function's continuity and
differentiability, we present the following theorems.

\newtheorem{theorem}{Theorem}
\begin{theorem}[continuity]
$R_{MS} (W;x,y)$ is a continuous function with respect to all bezigon
parameters $W$.
\end{theorem}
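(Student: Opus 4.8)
The plan is to strip from $R_{MS}(W;x,y)$ everything independent of $W=(B,C)$, reduce the claim to the continuity in $B$ of finitely many scalar wavelet coefficients, and then establish that continuity by a continuity-under-the-integral-sign argument rather than a naive ``composition of continuous maps'' (the latter fails, as explained below). Unlike the companion differentiability statement to follow, this argument will not need the theory of generalized functions.

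First I would fix the pixel $(x,y)$. In \eqref{eq:R_MS} every wavelet value $\psi_{s,k}^{(\cdot)}(x,y)$ is then a constant, and the index ranges ($1\le j\le N$, $0\le s\le d$, $k\in K$) are all finite, so $R_{MS}(W;x,y)=c(C;x,y)\cdot\Phi(B)$, where $\Phi(B)$ is a finite linear combination, with those constant values as coefficients, of the functions $c_{s,k}^{(\cdot)}(B;j)$. For fixed $(x,y)$ the colour factor $c(C;x,y)$ is affine in $C$ in either colour model -- equal to $C_0$ in the uniform case and to $C_0+C_1x+C_2y+C_3x^2+C_4xy+C_5y^2$ in the quadratic case -- hence continuous in $C$. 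Since finite sums and products of continuous functions are continuous, it remains only to prove that each coefficient $c_{s,k}^{(\cdot)}(B;j)$ is continuous in $B$.

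For this I would unpack the definitions of Appendix~\ref{subsec:appendix-definitions}: by Green's theorem each $c_{s,k}^{(\cdot)}(B;j)$ is a boundary line integral whose restriction to the $j$-th B\'ezier curve has the form
\[
c_{s,k}^{(\cdot)}(B;j)=\int_0^1 \Psi_{s,k}^{(\cdot)}\!\left(X_j(B_x;t),\,Y_j(B_y;t)\right)\frac{d}{dt}Y_j(B_y;t)\,dt
\]
(up to a sign, and with the roles of $X_j,Y_j$ possibly exchanged, according to the Haar component), where $\Psi_{s,k}^{(\cdot)}$ is a one-directional antiderivative of the relevant Haar basis function. This is where the proof stops being routine: $\Psi_{s,k}^{(\cdot)}$ is continuous (indeed piecewise linear) along the direction of antidifferentiation, but it is piecewise constant with jumps across finitely many lines transverse to that direction (the relevant cell edges and midline), so the integrand $g(B,t)$ above is \emph{not} jointly continuous in $(B,t)$ and one cannot simply compose continuous maps. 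What rescues the argument is that the jump set of $g$, as a subset of $[0,1]$, is the preimage of those finitely many lines under the polynomial coordinate map $t\mapsto Y_j(B_y;t)$ of \eqref{eq:X_j-Y_j}: for every $B$ it is finite, hence a Lebesgue-null set (and in the degenerate case that a whole sub-arc lies along such a line the residual ambiguity does not affect the integral), while as $B\to B_0$ the indicator singling out which transverse strip the curve occupies converges pointwise almost everywhere on $[0,1]$, even when $Y_j(B_{0,y};\cdot)$ attains a critical value on one of the lines. Since the remaining factors of $g$ are uniformly continuous and bounded on any bounded neighbourhood of $B_0$, the bounded convergence theorem yields continuity of $c_{s,k}^{(\cdot)}(\cdot\,;j)$ at $B_0$; combined with the first step, this proves the theorem.

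I expect this measure-theoretic step -- continuity of the integral as the control points sweep past degenerate configurations -- to be the only genuine obstacle; finiteness of the sums, polynomiality of the B\'ezier maps and of the colour model, and continuity of a parameter-dependent integral are all standard. A cleaner alternative, bypassing the wavelet bookkeeping, is to exploit the fact that this Haar rasterization reproduces the box-filtered (per-pixel average) indicator of the region bounded by the bezigon, so that $R_{MS}(W;x,y)=c(C;x,y)$ times the signed fraction of pixel $(x,y)$ covered by that region; the coverage is continuous in $B$ because the region's winding (or even--odd fill) function converges pointwise almost everywhere and is uniformly bounded on bounded sets of control points, so bounded convergence again applies. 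Either way the decisive ingredient is the same continuity-under-the-integral fact.
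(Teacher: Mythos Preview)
Your proposal is correct and, in one respect, more complete than the paper's own argument. Both you and the paper first split off the colour factor $c(C;x,y)$ (continuous in $C$ for the stated colour models) and reduce to showing that each coefficient $c_{s,k}^{(\cdot)}(B;j)$ is continuous in $B$. At that point the paper (Appendix~\ref{subsec:appendix-continuity}) simply records that the antiderivatives $\tilde\phi_{s,l}$ and $\tilde\psi_{s,l}$ are continuous, that $X_j,Y_j$ are polynomial in $B$, and concludes by composition. You correctly flag that this alone is not enough: each coefficient integrand also carries an undifferentiated factor $\phi_{s,k_y}(Y_j)$ or $\psi_{s,k_y}(Y_j)$, which is piecewise constant with jumps, so the integrand is not jointly continuous in $(B,t)$ and the naive ``composition of continuous maps'' step does not immediately apply. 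Your bounded-convergence argument --- pointwise a.e.\ convergence of the step-function factor as $B\to B_0$, a uniform bound supplied by the polynomial factors on bounded $B$-neighbourhoods, and the observation that the degenerate case where $Y_j(B_{0,y};\cdot)$ is constant along a jump line is harmless because then $Y_j'=0$ --- is precisely the missing ingredient that makes the conclusion rigorous. The alternative ``per-pixel coverage fraction'' viewpoint you sketch is also valid and arguably cleaner, though it is not the route the paper takes.
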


\begin{proof} As previously stated, the bezigon parameters $W$ consist of the
color parameters $C$ and the geometric parameters $B$. According to
Equation~\ref{eq:R_MS}, $R_{MS} (W;x,y)$ is continuous as long as the assumed
color model $c(C;x,y)$ is continuous with respect to the color parameters $C$,
which is often the case. With respect to the geometric parameters $B$, $R_{MS}
(W;x,y)$ is also continuous. A detailed analysis can be found in
Appendix~\ref{subsec:appendix-continuity}.
\end{proof}

Obviously, if $R_{MS} (W;x,y)$ serves as our rasterization function $R(W)$,
then the resultant data energy is also a continuous function. The smooth curve
that corresponds to our data energy in Figure~\ref{fig:data-energy-comparison}b
reflects such a property as well. The continuity of the data energy not only
enables us to apply a common solver for the nonlinear optimization but also
facilitates the resolution of any ambiguity that arises from the observation of
the input data.

\begin{theorem}[differentiability]
\label{theorem:differentiability}
$R_{MS} (W;x,y)$ is differentiable with respect to the bezigon parameters $W$.
\end{theorem}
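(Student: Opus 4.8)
\emph{Proof plan.}
The plan is to fix the pixel coordinate $(x,y)$ and reduce the claim to the differentiability of each wavelet coefficient $c_{s,k}^{(\cdot)}(B;j)$ in the control points, and then to tame the apparent singularities created by the Haar discontinuities using the calculus of generalized functions. Differentiability with respect to the color parameters $C$ is immediate from Equation~\ref{eq:R_MS}, since $R_{MS}(W;x,y)$ is the product of $c(C;x,y)$ — differentiable in $C$ by the assumption on the color model — with a factor independent of $C$. Hence I would fix $C$ and, because $R_{MS}$ is a \emph{finite} sum of terms $c(C;x,y)\,c_{s,k}^{(\cdot)}(B;j)\,\psi_{s,k}^{(\cdot)}(x,y)$ in which $\psi_{s,k}^{(\cdot)}(x,y)$ is a constant for fixed $(x,y)$ and $c_{s,k}^{(\cdot)}(B;j)$ involves only the four control points of the $j$-th curve, it suffices to prove that each coefficient $c_{s,k}^{(\cdot)}(B;j)$ is differentiable in $x_{j,1},\dots,y_{j,4}$.

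Next I would write each coefficient, following its construction in Appendix~\ref{subsec:appendix-definitions}, in the contour-integral form obtained from Green's theorem, e.g. $c_{s,k}^{(\cdot)}(B;j)=\oint_{B_j} F_{s,k}(x,y)\,dy$, where $F_{s,k}$ is an antiderivative of a tensor-product Haar basis function: a compactly supported, low-degree piecewise-polynomial function that is continuous in one coordinate but — for the mixed-type wavelets — still jump-discontinuous in the other coordinate across the dyadic grid lines. Parametrizing the $j$-th Bézier curve by $t\in[0,1]$ turns this into $\int_0^1 F_{s,k}\bigl(X_j(B_x;t),Y_j(B_y;t)\bigr)\,\frac{\partial Y_j}{\partial t}\,dt$, where $X_j$ and $Y_j$ are polynomial in $t$ and affine in the control points.

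The crux is to differentiate this integral with respect to a control-point coordinate by passing the derivative inside. The contributions from the smooth factors are harmless; the delicate one is $\partial F_{s,k}/\partial x$ or $\partial F_{s,k}/\partial y$, which, across a jump of $F_{s,k}$, produces a Dirac mass supported on a grid line, so the pointwise derivative of the integrand fails to exist classically at the finitely many parameters where the curve meets that line. Here I would invoke the theory of generalized functions \cite{gel1968generalized}: read these spatial derivatives as distributions — a bounded piecewise-polynomial part plus Dirac masses on grid lines — justify the distributional differentiation under the integral sign, pull the delta masses back to $t$ via $\delta\bigl(Y_j(t)-y_0\bigr)=\sum_{t^\ast}|Y_j'(t^\ast)|^{-1}\delta(t-t^\ast)$, and collect the resulting boundary terms. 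The key structural fact is that the Haar multiresolution is designed to annihilate exactly such jumps: summing $c_{s,k}^{(\cdot)}(B;j)\,\psi_{s,k}^{(\cdot)}(x,y)$ over all scales $s=0,\dots,d$ and translations $k$ (together with the coarse $(0,0)$ term) telescopes to the projection onto the finest approximation space, whose integrated basis functions are globally continuous clamped ramps; so the delta contributions cancel across scales, and the terms that survive combine into a quantity that is itself a continuous function of $B$. Equivalently, $R_{MS}(W;x,y)/c(C;x,y)$ is the fraction of the pixel containing $(x,y)$ that is covered by the region bounded by the bezigon, and the same Green's-theorem/Leibniz argument — now with a continuous integrand — shows this coverage depends differentiably on $B$.

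I expect the main obstacle to be the rigorous treatment of the \emph{degenerate} configurations of $B$, which form a measure-zero set: a boundary piece lying along a grid line, or tangent to one. For such configurations the surviving delta terms need not cancel and classical differentiability can fail at isolated parameters, so one must either restrict to the open dense set of configurations in which the bezigon meets every grid line transversally — which is all that the optimization ever visits — or show that the relevant one-sided limits still agree; checking this, together with making the distributional differentiation-under-the-integral fully rigorous and tracking the jump bookkeeping across the dyadic hierarchy, is where the real work lies. The remaining steps — the polynomial dependence of $X_j,Y_j$ on the control points, the boundedness of the piecewise-polynomial parts, and absolute continuity in $t$ — are routine, and I would relegate the detailed computation to the appendix.
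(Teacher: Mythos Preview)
Your setup matches the paper's appendix closely: fix $(x,y)$, dispose of the color parameters via the product structure of Equation~\eqref{eq:R_MS}, reduce to differentiating each coefficient $c_{s,k}^{(\cdot)}(B;j)$ in the control points, and handle the jumps of the Haar factors distributionally via $\delta(g(t))=\sum_{t^\ast}|g'(t^\ast)|^{-1}\delta(t-t^\ast)$. Up to that point the two arguments are the same.

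Where you diverge is the telescoping step. The paper does \emph{not} argue that ``delta contributions cancel across scales,'' nor does it invoke the coverage interpretation. It simply differentiates each coefficient in isolation: for the ``easy'' variable (e.g.\ $\partial c_{s,k}^{(0,0)}/\partial x_{j,i}$) the derivative passes onto the continuous antiderivative $\tilde\phi_{s,k_x}$ and one gets an ordinary integral; for the ``hard'' variable (e.g.\ $\partial c_{s,k}^{(0,0)}/\partial y_{j,i}$) the generalized derivative of $\phi_{s,k_y}$ produces Dirac terms which, after the composition formula and integration in $t$, become a \emph{finite sum of point evaluations} at the roots $t_0\in T_0$, $t_1\in T_1$ plus a residual polynomial integral (the paper's Equation~(34) and its analogues). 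Each such expression is a well-defined number, so every coefficient is individually differentiable and the finite sum $R_{MS}$ inherits this. No cancellation between scales is needed, and the boundary terms do not disappear --- they are the derivative. Your telescoping-to-coverage argument is a valid alternative (the level-$d$ scaling projection does have a continuous integrated kernel), but it is extra machinery you do not need, and phrasing it as ``the deltas cancel'' misdescribes what happens: after integration against $t$ there are no deltas left to cancel, only numbers.

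On one point you are more scrupulous than the paper: you flag the degenerate configurations (tangencies to dyadic grid lines, $Y_j'(t^\ast)=0$, or a curve segment lying on a grid line) where the composition formula for $\delta$ fails. The appendix silently assumes all roots are simple and never discusses this measure-zero set; your instinct to restrict to the generic open dense set of transversal configurations is the honest statement of what is actually proved.
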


\begin{proof} Most color models $c(C;x,y)$ are differentiable with respect to
the color parameters $C$. In such cases, $R_{MS} (W;x,y)$ is obviously
differentiable with respect to the color parameters, according to
Equation~\ref{eq:R_MS}. However, the differentiability of $R_{MS} (W;x,y)$ with
respect to the geometric parameters $B$ is not obvious. We use the theory of
generalized functions to analyze this matter. Because of space limitations and
the complexity of the discussion, the proof and the derivatives are presented
in Appendix~\ref{subsec:appendix-derivatives}.
\end{proof}

Based on the above analysis and theorems, we can conclude that $R_{MS} (W;x,y)$
may be a suitable candidate for the rasterization function $(W;x,y)$ in
Equation~\ref{eq:data}. Therefore, this rasterization function may be adopted
in the proposed framework. Then, our final data energy can be rewritten as
\begin{equation}
E_{data}(W;I) = \sum_{(x,y) \in \Lambda} \|R_{MS}(W;x,y)-I(x,y)\|^2.
\end{equation}

Now, we consider the second issue. Because there are many commonly used
rasterization methods, it is often the case that the input raster image is not
generated by the rasterization method used in our data energy term. This could
be an issue if there are significant differences in the rasterization results
between our chosen method and the method used to generate the input image.
Therefore, to ensure the practical utility of the proposed vectorization
method, we must investigate whether the selected rasterization function can
closely approximate the rendering results of other commonly used rasterization
approaches.

Fortunately, our selected rasterizer $R_{MS} (W;x,y)$ is still a suitable
choice in this context. To prove this claim, we perform the following
experiment: We collect a set of real-world vector images. All these vector
images are rasterized by each of the commonly used anti-aliased rasterizers,
using the recently proposed methods, and by $R_{MS} (W;x,y)$. Note that the
only possible differences in images produced by different rasterizers lie in
pixels that intersect the bezigon boundary. To further clarify the comparison,
we consider only the differences among such pixels in the resultant images.
Histograms of these differences are presented in
Figure~\ref{fig:data-assumption}. It is readily apparent that a large
proportion of the ``boundary'' pixels that are rendered by any other rasterizer
remain identical those produced by $R_{MS} (W;x,y)$. Moreover, all
distributions have means of zero and small variances. Therefore, the pixel
values generated by our rasterization function can be safely assumed to be a
good approximation to those generated by other commonly used rasterization
methods, and hence, our rasterization function can still accurately model the
original rasterization process of most clipart images.

\begin{figure}[!t]
\centering
\includegraphics[width=2.5in]{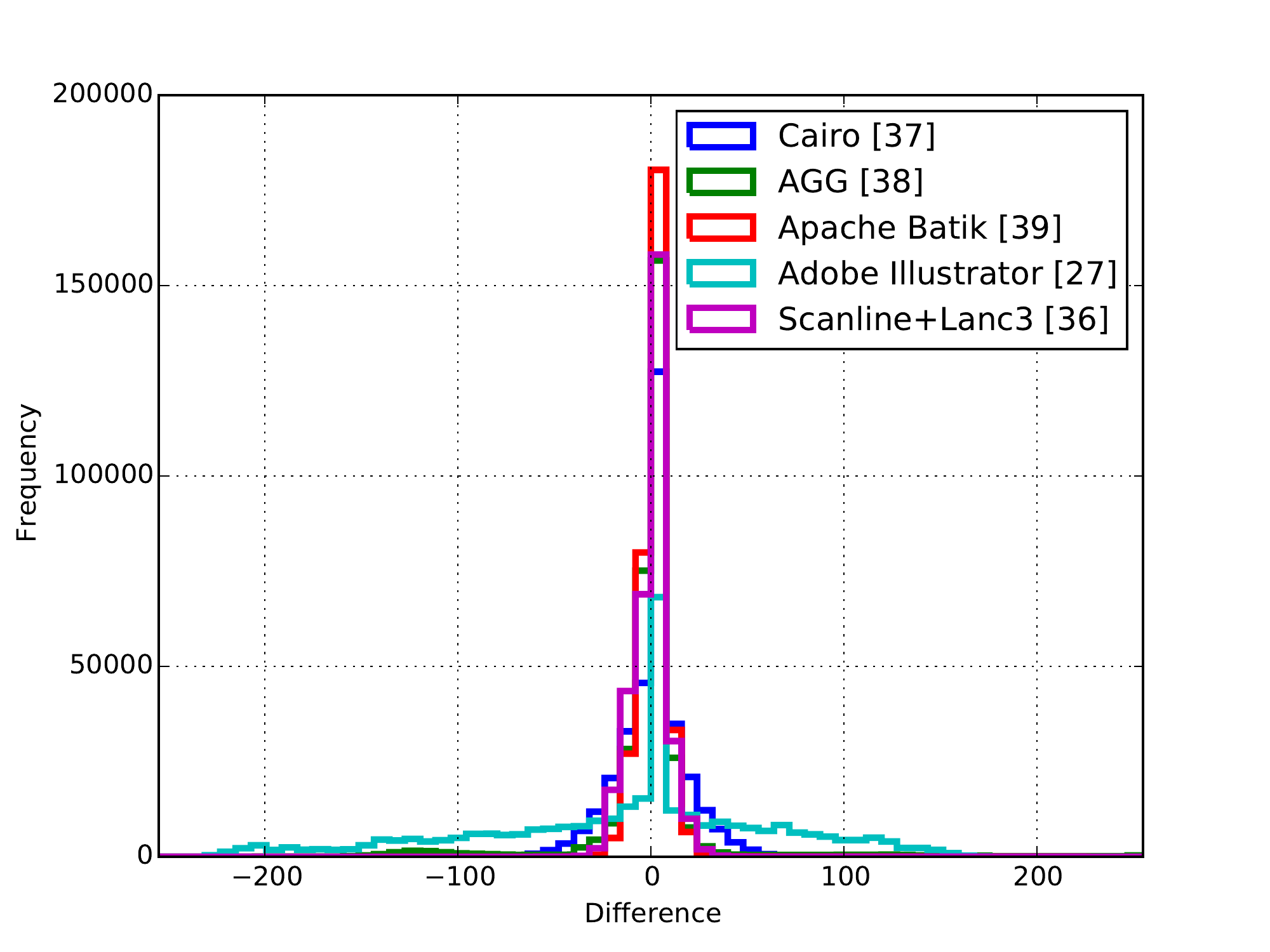}
\caption{Histograms of differences between pixel values produced by $R_{MS}
(W;x,y)$ and those produced by the other rasterizers listed in the figure.}
\label{fig:data-assumption}
\end{figure}

In summary, we have proven the suitability of our bezigon rasterization
function for optimization as well as its compatibility with various clipart
raster input, and we have presented the definition of our data energy. Notably,
for any other rasterization function that is a candidate for application to
vectorize a certain type of image, a similar procedure should be followed to
evaluate the suitability and compatibility of that function.

\subsection{Prior Energy}
\label{subsec:prior-energy}

After the data energy has been carefully selected, various simple cases (e.g.,
the vectorization of a simple bezigon in a high-resolution raster image) can
already be effectively addressed when there is adequate information implicit in
the observed raster data. However, it is more often the case that the bezigons
are relatively complex and that the information available in the raster input
is inadequate. In such a case, profound uncertainty regarding the correct
solution may remain if the data energy alone is considered. Therefore, the
optimization may result in unreasonable bezigons that can be easily identified
by the human eye.

Indeed, our intensive experiments provide evidence of such issues. More
specifically, the failure cases of direct bezigon optimization using only data
energy generally fall into four categories: (a) self-intersections, (b) false
corners with small angle variations, (c) short handles, and (d) twisted
sections (Figure~\ref{fig:toy-failcase}).

All these bezigons are considered to be unreasonable because they are
aesthetically unpleasing and, according to expert opinion, are unlikely to be
drawn or traced by a professional illustrator. These types of bezigons are also
rare in typical vector images. (Taking self-intersection as an example, we find
that very few bezigons in vector images from the Open Clipart
library~\cite{phillips2005introduction} intersect with themselves. Most
bezigons that exhibit self-intersection are believed to have been created by an
amateur or automatically traced from a raster image.) The reason for the
occurrence of such illegal bezigons is that their corresponding raster images
are quite similar to the input images (compare
Figures~\ref{fig:failcase-intersect}f, \ref{fig:failcase-anglechange}f,
\ref{fig:failcase-handle}f and \ref{fig:failcase-short}f with
\ref{fig:failcase-intersect}h, \ref{fig:failcase-anglechange}h,
\ref{fig:failcase-handle}h and \ref{fig:failcase-short}h, respectively),
although their vector forms are significantly different from the ground-truth
images (compare Figures~\ref{fig:failcase-intersect}b,
\ref{fig:failcase-anglechange}b, \ref{fig:failcase-handle}b and
\ref{fig:failcase-short}b with \ref{fig:failcase-intersect}d,
\ref{fig:failcase-anglechange}d, \ref{fig:failcase-handle}d and
\ref{fig:failcase-short}d, respectively). This situation results in low data
energy, especially when the resolution of the input image is relatively low.

Our prior energy is designed precisely to solve the above-mentioned problems
and to ensure that the resultant bezigons are reasonable and aesthetically
pleasing. More specifically, we construct a prior functional to reduce the
likelihood of each type of failure cases. Thus, our prior energy has the
following form:
\begin{equation}
\begin{split}
E_{prior} (B) &= \lambda_{spt} E_{spt}  (B) + \lambda_{apt} E_{apt}  (B) \\
               &+ \lambda_{hpt} E_{hpt}  (B) + \lambda_{lpt} E_{lpt}  (B)
\end{split}
\end{equation}
where $E_{spt}$, $E_{apt}$, $E_{hpt}$ and $E_{lpt}$ represent the
self-intersection prior term (SPT), the angle-variation prior term (APT), the
B\'ezier-handle prior term (HPT) and the curve-length prior term (LPT),
respectively, and $\lambda_{spt}$, $\lambda_{apt}$, $\lambda_{hpt}$ and
$\lambda_{lpt}$ are their respective weights. Each of the prior terms is
specifically defined and explained in the following subsections.

\subsubsection{Elimination of Self-intersection}

\begin{figure}[!t]
\centering
    \subfigure[] {\includegraphics[height=0.7in]
                 {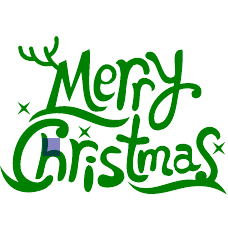}}
    \subfigure[] {\includegraphics[height=0.7in]
                 {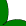}}
    \subfigure[] {\includegraphics[height=0.7in]
                 {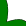}}
    \subfigure[] {\includegraphics[height=0.7in]
                 {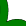}}
    \subfigure[] {\includegraphics[height=0.7in]
                 {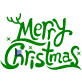}}
    \subfigure[] {\includegraphics[height=0.7in]
                 {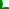}}
    \subfigure[] {\includegraphics[height=0.7in]
                 {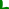}}
    \subfigure[] {\includegraphics[height=0.7in]
                 {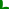}}
\caption{An example of eliminating self-intersection. (a) The entire
ground-truth vector image and the local patch to be processed. (b) Result of
optimization without the SPT. (c) Result of optimization with the SPT. (d)
Ground truth. (e-h) are the rasterization results corresponding to (a-d),
respectively.}
\label{fig:failcase-intersect}
\end{figure}

Certain approaches are seemingly capable of avoiding self-intersection but are
not feasible in practice. One intuitive method is to enforce a set of highly
coupled nonlinear inequality constraints and use a primal-dual interior point
method \cite{mehrotra1992implementation} for optimization. However, this
approach is not suitable in our case because of its computational complexity.
As another na\"{\i}ve method, we could assign a large constant energy to a
bezigon that is detected as exhibiting self-intersection. However, this
provides almost no guidance for a bezigon that has already manifested
self-intersection during optimization.

\begin{figure}[!t]
\centering
    \subfigure[] {\includegraphics[width=0.7in]{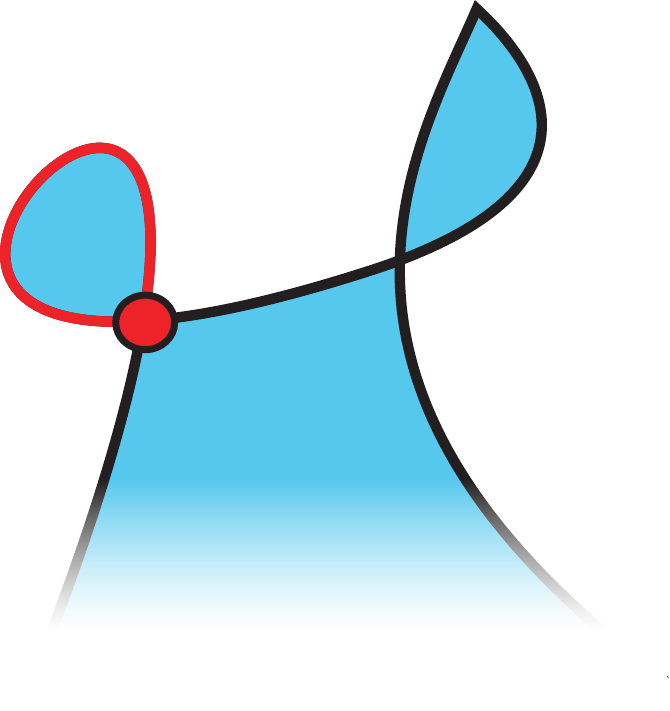}}
    \subfigure[] {\includegraphics[width=0.7in]{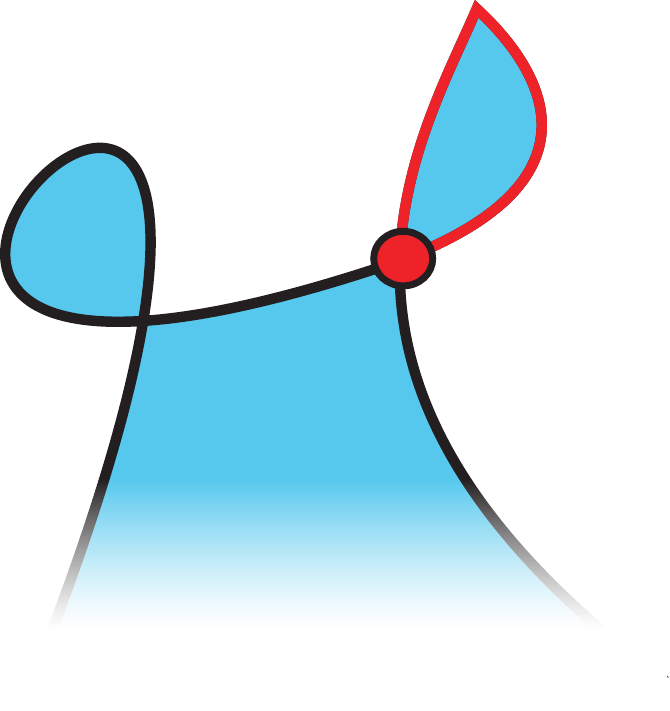}}
\caption{Measuring the extent of self-intersection. (a) and (b) address the
first and second intersection points, respectively. The shorter divided portion
in each phase is indicated by the red curve.}
\label{fig:remove-intersect}
\end{figure}

Instead, we attempt to analytically measure the extent of self-intersection and
provide an effective regularization to automatically avoid bezigons with
self-intersection. The primary advantage of our method is that it not only is
capable of preventing self-intersection but also provides effective guidance to
eliminate self-intersection that has already occurred. Moreover, it does not
require expensive computation.

The procedure is illustrated in Figure~\ref{fig:remove-intersect}. We first
estimate all intersection points (indicated by red dots), if any. Each such
point divides the bezigon outline into two parts. We consider the shorter of
these parts (shown as red curves) and measure the extent of self-intersection
by summing over their lengths. More formally, the measurement can be written as
\begin{equation}
E_{spt} (B) = \sum_{(t_1,t_2) \in T} \min⁡ (L(t_1,t_2), L(0,N)-L(t_1,t_2)).
\end{equation}
Here, $T = \{(t_1,t_2)|t_1<t_2, S(t_1)=S(t_2)\}$ is the set of partitions
corresponding to all intersection points (red dots in
Figure~\ref{fig:remove-intersect}), and $L(t_1,t_2)$ represents the arc length
along the curve $S$ from $t_1$ to $t_2$ , i.e.,
\begin{equation}
L(t_1,t_2) \! = \!\!\!\!\!\! \sum_{j=\lfloor t_1 \rfloor
+ 1}^{\lceil t_2 \rceil} \!\! \int_{t_{j,1}}^{t_{j,2}} \!\!
\sqrt{\left(\frac{dX_j(B_x,t)}{dt}\right)^2 \!\!+\!\!
\left(\frac{dY_j(B_y,t)}{dt}\right)^2} \!\! dt,
\end{equation}
where $t_{j,1} = \max(t_1-j+1, 0)$ and $t_{j,2} = \min(t_2-j+1,1)$.

The energy term $E_{spt}$ penalizes significant self-intersection. The more
severe an intersection is, the more closely the length of a shorter part
approaches the length of a longer part, and hence, the larger $E_{spt}$ will
be. When there is no self-intersection, $E_{spt}$ is equal to zero. Our
experiments demonstrate that optimization using the SPT results in bezigons
that contain very little self-intersection and are likely to be close to the
ground-truth image in terms of topology (see
Figure~\ref{fig:failcase-intersect}c).

\subsubsection{Regularization for Angle Variations}

\begin{figure}[!t]
\centering
    \subfigure[] {\includegraphics[height=0.7in]
                 {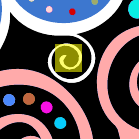}}
    \subfigure[] {\includegraphics[height=0.7in]
                 {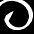}}
    \subfigure[] {\includegraphics[height=0.7in]
                 {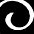}}
    \subfigure[] {\includegraphics[height=0.7in]
                 {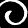}}
    \subfigure[] {\includegraphics[height=0.7in]
                 {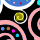}}
    \subfigure[] {\includegraphics[height=0.7in]
                 {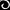}}
    \subfigure[] {\includegraphics[height=0.7in]
                 {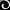}}
    \subfigure[] {\includegraphics[height=0.7in]
                 {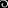}}
\caption{An example of regularization for angle variations. (a) The entire
ground-truth vector image and the local patch to be processed. (b) Result of
optimization without the APT. (c) Result of optimization with the APT. (d)
Ground truth. (e-h) are the rasterization results of (a-d), respectively.}
\label{fig:failcase-anglechange}
\end{figure}

Although a simple curve-smoothing algorithm may remove small angle variations,
such a method will most likely fail to preserve other visually significant
corners. Moreover, it may not always be possible to identify the saliency of
the corners using a fixed threshold for angle variations. Consequently, we must
develop a more sophisticated method of smoothing out insignificant corners
while preserving the small number of significant corners.

For this purpose, we penalize the sum of all angle variations. As a result, the
optimized bezigon will consist of predominantly zero-angle variations and a
small number of non-zero angle variations. This is important because it
incorporates corner detection into the bezigon optimization.

More formally, we denote the two tangent vectors of the $j$-th endpoint by
$\mathbf{a}_j=(x_{j,1} - x_{j-1,3},y_{j,1} - y_{j-1,3})$ and
$\mathbf{b}_j=(x_{j,2} - x_{j,1},y_{j,2} - y_{j,1})$. Then, the APT can be
written as follows:
\begin{equation}
E_{apt}(B) = \sum_{j=1}^N \cos^{-1} \frac{\mathbf{a}_j \cdot
\mathbf{b}_j}{\|\mathbf{a}_j\|\|\mathbf{b}_j\|}.
\end{equation}

The experimental results demonstrate that optimization with the APT can retain
the smoothness of the bezigon while preserving visually significant corners
(Figure~\ref{fig:failcase-anglechange}c).

\subsubsection{Avoidance of Short B\'ezier Handles}

\begin{figure}[!t]
\centering
    \subfigure[] {\includegraphics[height=0.7in]
                 {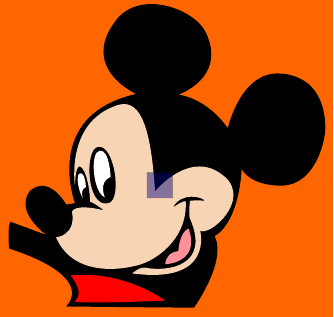}}
    \subfigure[] {\includegraphics[height=0.7in]
                 {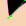}}
    \subfigure[] {\includegraphics[height=0.7in]
                 {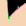}}
    \subfigure[] {\includegraphics[height=0.7in]
                 {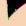}}
    \subfigure[] {\includegraphics[height=0.7in]
                 {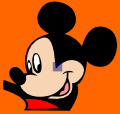}}
    \subfigure[] {\includegraphics[height=0.7in]
                 {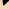}}
    \subfigure[] {\includegraphics[height=0.7in]
                 {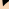}}
    \subfigure[] {\includegraphics[height=0.7in]
                 {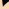}}
\caption{An example of the avoidance of short B\'ezier handles. (a) The entire
ground-truth vector image and the local patch to be processed. (b) Result of
optimization without the HPT. (c) Result of optimization with the HPT. (d)
Ground truth. (e-h) are the rasterization results of (a-d), respectively.}
\label{fig:failcase-handle}
\end{figure}

To guard against the possibility of short handles, we penalize any short handle
using an inverse barrier function
\begin{equation}
E_{hpt}(B) = \sum_{j=1}^N
\left(\begin{split}
&\frac{1}{\sqrt{(x_{j,2} \!-\! x_{j,1})^2 + (y_{j,2} \!-\! y_{j,1})^2}} \\
          &+ \frac{1}{\sqrt{(x_{j\!+\!1,1} \!-\! x_{j,3})^2 + (y_{j\!+\!1,1}
          \!-\! y_{j,3})^2}}
\end{split}\right).
\end{equation}

This term imposes a large penalty on short handles because $E_{hpt}(B)$ tends
toward infinity as any handle length tends toward $0$. When the length of each
handle is sufficiently large, this energy will be very small and hence will not
engender serious side effects.

It can be experimentally demonstrated that optimization using the HPT
considerably reduces the occurrence of unnatural bezigons, as illustrated in
Figure~\ref{fig:failcase-handle}b. Although the resultant handle may sometimes
be slightly longer than it should be (compare the locations of the control
points in Figure~\ref{fig:failcase-handle}c with those in
Figure~\ref{fig:failcase-handle}d), such a result only affects the quality or
editability of the vector results in general.

\subsubsection{Curve-Length Prior}

\begin{figure}[!t]
\centering
    \subfigure[] {\includegraphics[height=0.7in]
                 {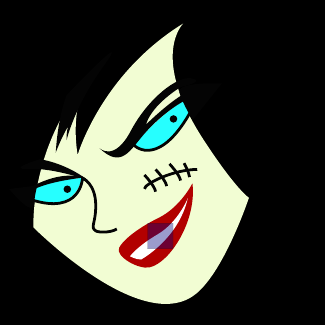}}
    \subfigure[] {\includegraphics[height=0.7in]
                 {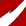}}
    \subfigure[] {\includegraphics[height=0.7in]
                 {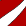}}
    \subfigure[] {\includegraphics[height=0.7in]
                 {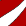}}
    \subfigure[] {\includegraphics[height=0.7in]
                 {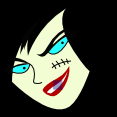}}
    \subfigure[] {\includegraphics[height=0.7in]
                 {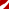}}
    \subfigure[] {\includegraphics[height=0.7in]
                 {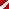}}
    \subfigure[] {\includegraphics[height=0.7in]
                 {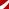}}
\caption{An example of the application of the curve-length prior. (a) The
entire ground-truth vector image and the local patch to be processed. (b)
Result of optimization without the LPT. (c) Result of optimization with the
LPT. (d) Ground truth. (e-h) are the rasterization results of (a-d),
respectively.}
\label{fig:failcase-short}
\end{figure}

Based on the experience of experts who are proficient in image tracing, a
traced curve tends to be stretched as much as possible unless there is strong
evidence that the curve should shrink or twist. This prior knowledge can be
used to eliminate invalid bezigons of this type because the occurrence of a
highly twisted curve without strong evidence in support of such twisting from
the raster input can be easily identified.

Based on such prior knowledge, we penalize the curve length to avoid invalid
bezigons of this type. Thus, the LPT can be defined as follows:
\begin{equation}
E_{lpt}(B) = \sum_{j=1}^N \int_{0}^{1} \sqrt{\left(\frac{dX_j(t)}{dt}\right)^2
           + \left(\frac{dY_j(t)}{dt}\right)^2} dt.
\end{equation}

Figure~\ref{fig:failcase-short}c demonstrates that the unexpectedly twisted
curve of Figure~\ref{fig:failcase-short}b, for which there is insufficient
evidence in the observed data, can be avoided through the adoption of the LPT.

\subsection{Piecewise Bezigon Optimization}
\label{subsec:piecewise}

Once we have obtained the energy function developed in the previous sections,
in many cases, this function can be minimized using a general non-linear
optimization method. However, because a typical bezigon often consists of a
large number of parameters to be optimized and because the valid range for each
parameter is large, the efficiency and even the convergence of the optimization
process might be an issue. However, bezigon parameters possess a strong local
control property. We may reduce the number of redundant calculations by fully
utilizing this property.

In this section, we explore the nature of bezigon parameters and propose a
piecewise optimization strategy that allows our high-dimensional problem to be
decomposed into several subcomponents that may be individually solved.

The fundamental concept of piecewise optimization is to optimize only a subset
of the geometric parameters of each bezigon at any given time. This task is
feasible because the effect of varying any given control point is limited to a
local region of the bezigon.

More specifically, we regard two consecutive B\'ezier curve sections as one
curve piece. Therefore, a bezigon with $N$ curve sections also consists of $N$
overlapped curve pieces. All curve pieces will be successively optimized. When
optimizing a curve piece, we fix the first and last endpoints of the curve
piece and determine the optimal solution for the four intermediate control
points and the middle endpoint. We first optimize the five active control
points (the red points in Figure~\ref{fig:piecewise}a) of one curve piece and
subsequently optimize the corresponding points (the red points in
Figure~\ref{fig:piecewise}b) of the next curve piece. It should be noted that
the two consecutive pieces overlap and that two of the intermediate control
points (e.g., those shown in red in both Figure~\ref{fig:piecewise}a and
Figure~\ref{fig:piecewise}b) are shared. Therefore, all intermediate control
points will be optimized twice in individual iterations. The process
iteratively progresses from the first curve piece to the last.

\begin{figure}[!t]
\centering
    \subfigure[] {\includegraphics[width=1.5in]{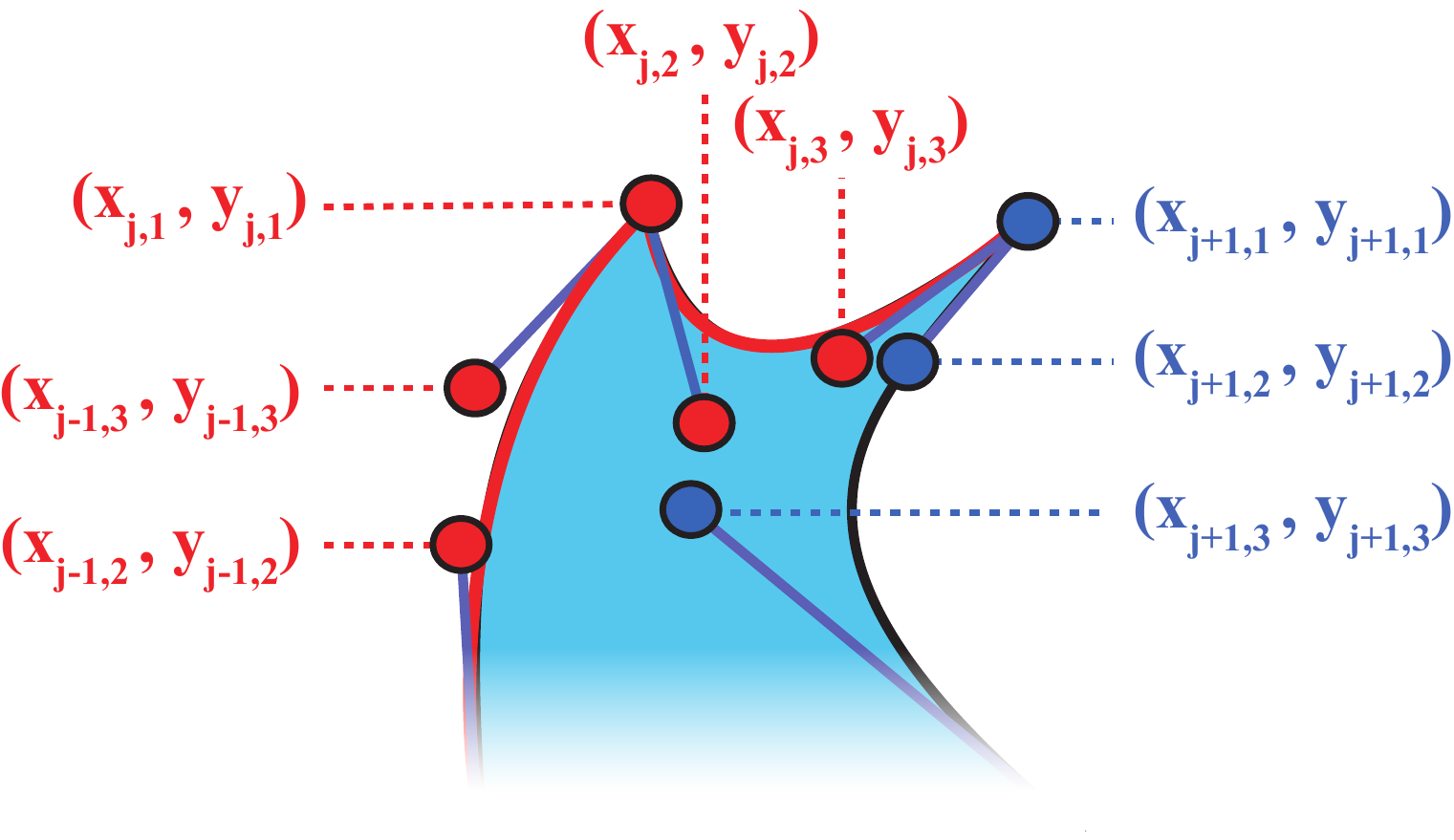}}
    \subfigure[] {\includegraphics[width=1.5in]{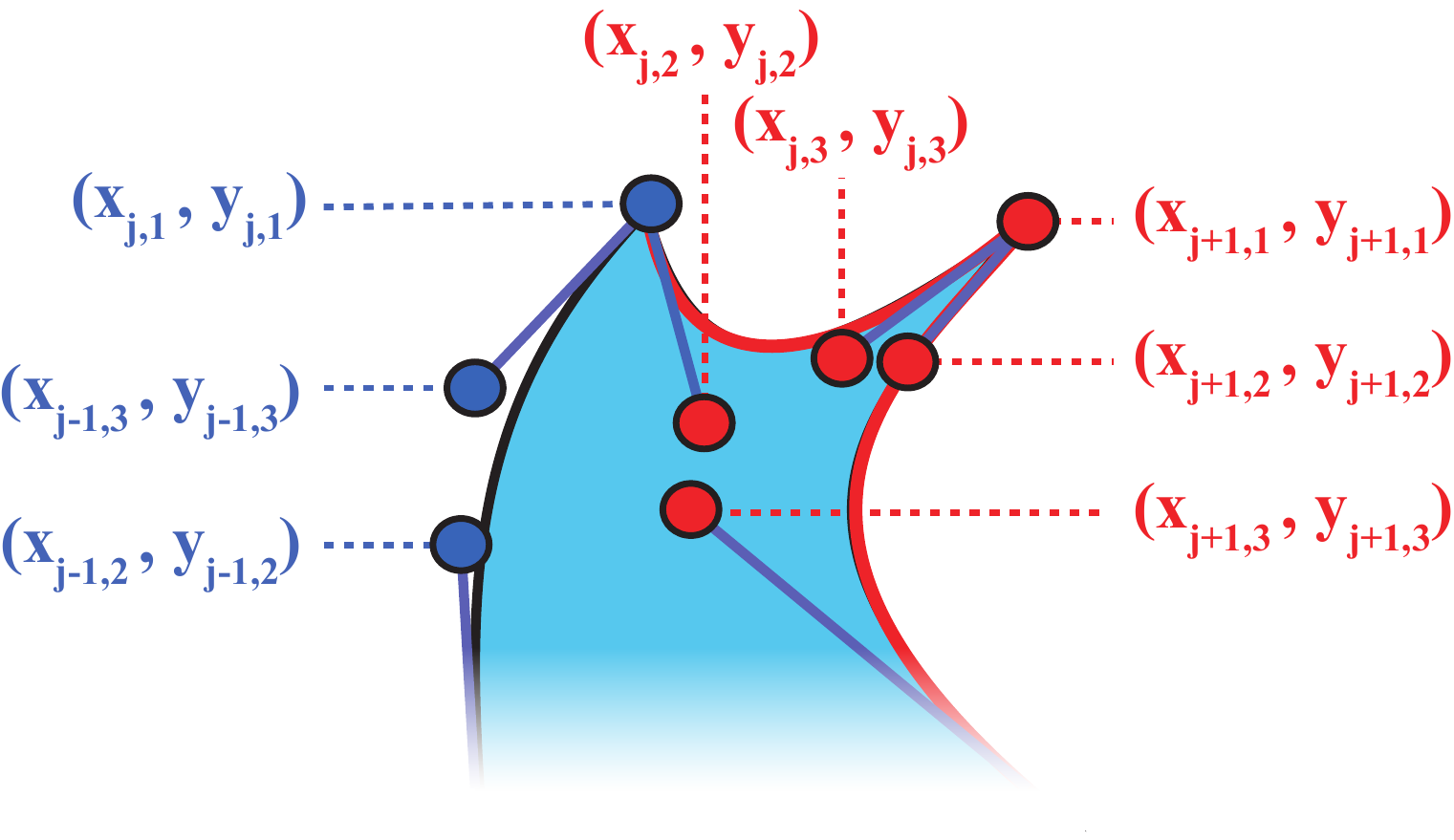}}
\caption{Overlapped piecewise optimization. The curve sections being optimized
are shown in red. (a) Optimizing the curve piece $P_{j-1}$. (b) Optimizing the
curve piece $P_j$.}
\label{fig:piecewise}
\end{figure}

Formally, we represent the geometric parameters of the $j$-th piece to be
optimized as follows:
\begin{equation}
P_j = \begin{pmatrix}
x_{j,2} & x_{j,3} & x_{j+1,1} & x_{j+1,2} & x_{j+1,3} \\
y_{j,2} & y_{j,3} & y_{j+1,1} & y_{j+1,2} & y_{j+1,3}
\end{pmatrix}.
\end{equation}

All remaining geometric parameters $\tilde{P_j}$ of the bezigon are held fixed
during the present optimization. Therefore, optimizing this curve piece amounts
to identifying the optimal configuration $P_j^*$ that minimizes a function
composed of the local energies, i.e.,
\begin{equation}
\label{eq:P_j*}
P_j^* = \arg \min_{P_j \in \Omega_j} \left[ E_{data} (P_j; \tilde{P_j}, I) +
E_{prior} (P_j; \tilde{P_j}) \right].
\end{equation}
Here, $\Omega_j$ is the space of all possible geometric parameters for the
$j$-th piece.

Such a strategy substantially increases the efficiency of the entire
optimization. Although the objective function~\ref{eq:P_j*} is quite similar to
Equation~\ref{eq:w*}, the solution space $\Omega_j=\mathbb{R}^{10}$ is much
smaller than $\Omega$. Therefore, the original high-dimensional problem can be
decomposed into a set of lower-dimensional problems, which greatly improves the
efficiency of the overall optimization process. Moreover, all prior energy
terms, except the SPT, are simply the sum of the corresponding energy of each
curve section. Therefore, we can consider only two related sections when
calculating these terms. In this manner, a large number of redundant
computations can be eliminated.

Piecewise optimization not only is fast but also provides satisfactory bezigons
with almost no decrease in accuracy. The experimental results indicate that
after all curve pieces are traversed two or three times, in most cases, the
resultant bezigon is nearly perfect.

As an optional step, we can jointly optimize all geometric parameters once more
to further improve the result. Because our piecewise procedure can provide
substantially more accurate input for further optimization, this subsequent
global optimization can be significantly more efficient than it would be
without piecewise optimization.  The entire process of bezigon optimization is
summarized in Algorithm~\ref{alg:bezigon-optimization}.

\begin{algorithm}[h!]
    \caption{Bezigon Optimization}\label{alg:bezigon-optimization}
    \begin{algorithmic}[1]
        \REPEAT {
            \FOR{$j=1$ \TO $N$} {
                \STATE Optimize $P_j$ according to Equation~\ref{eq:P_j*}
                \STATE Update $B$ according to $P_j$
            }
            \ENDFOR
        }
        \UNTIL{converged}
        \STATE Optimize $B$ according to Equation~\ref{eq:w*} (optional)
    \end{algorithmic}
\end{algorithm}

The overlapped piecewise optimization strategy provides a fast yet accurate
method of bezigon optimization, which is an important prerequisite for the
practical application of our vectorization approach.

\section{Experiments}
\label{sec:experiments}

To demonstrate the effectiveness of the approach developed in this paper, in
this section, we quantitatively and qualitatively compare our method with other
vectorization methods. As stated in Section~\ref{sec:related-work}, many
vectorization algorithms and software packages exist. However, most academic
work on vectorization is not relevant for comparisons because it is primarily
focused on photographs or other types of vectorization. We restrict our
comparisons to two approaches that are specialized for the vectorization of
clipart images. One method is Vector Magic~\cite{vm2010}, which was developed
on the basis of the state-of-the-art method proposed by
\cite{diebel2008bayesian} \footnote{\cite{zhang2009vectorizing} may produce a
more reasonable segmentation than \cite{diebel2008bayesian}, especially when
the input image contains non-uniform color regions or decorative lines, and is
probably the best method for vectorizing clipart animations. However, as
previously stated, neither segmentation nor video vectorization is one of our
topics of focus. Therefore, we do not offer a comparison with
\cite{zhang2009vectorizing} in the present work.}. The other software package
we consider is Adobe Illustrator \cite{adobe2013}, which is a representative
example of a widely used commercial vectorization software. The experimental
results demonstrate that our approach is superior to these methods in terms of
bezigon quality.

\subsection{Implementation}

To evaluate the effectiveness of the proposed bezigon optimization method, we
have implemented a prototype image vectorization system.

The core of our system is bezigon optimization. Because of the continuity and
differentiability of our energy function, either a curve piece or a global
bezigon can be effectively optimized using many available optimization
algorithms (such as NEWUOA~\cite{powell2006newuoa},
l-BFGS~\cite{liu1989limited}, and the conjugate gradient
method~\cite{hestenes1952methods}).  Note that there are four tuning parameters
in our objective function, namely, the weights of the four prior terms.
Empirically, we set $\lambda_{spt}=1.0$ (to strongly penalize
self-intersection), $\lambda_{apt}=0.08$, $\lambda_{hpt}=0.1$ and
$\lambda_{lpt}=0.1$. Although there may be other weight settings that would
yield better performance, we did not perform a thorough search for the optimal
weights. According to our experimental results, the proposed method is
generally insensitive to these parameters. Our preset weights should yield
satisfactory results. However, as the quality of the raster input decreases
(e.g., low-resolution input), fine tuning $\lambda_{apt}$ may become necessary
to generate perfect bezigons. Although our framework does not intrinsically
rely on any assumption regarding the color model, for simplicity of
implementation and convenience of fair comparisons with the most commonly used
clipart image vectorization methods, our prototype system currently assumes
that the color in each bezigon is uniform \footnote{To the best of our
knowledge, this assumption is also adopted by most existing clipart image
vectorization approaches, including Vector Magic and Adobe Illustrator.}, i.e.,
$c(C;x,y)=C_0$ and the color parameter $C=C_0$ is an arbitrary vector in the
RGB color space.

The initial bezigons can be either extracted from the input image or obtained
using other vectorization methods. They are not required to be highly accurate.
Most initial bezigons in our experiments are far from perfect. Of course, if
the initial pose drifts too far from the optimal pose (e.g., approaches random
bezigons), our bezigon optimization may become trapped in a local optimum and
output imperfect results. However, in practice, this rarely occurs because it
is not very difficult to estimate a bezigon that is sufficient to serve as an
initial solution. The real difficulties lie in the subsequent optimization,
i.e., achieving bezigons of even higher precision, which is the key issue
addressed in this paper.

Note that because this prototype system was primarily developed as a proof of
concept, the speed of the process is not a priority at the moment. Our
implementation code is currently written in Python, a dynamically typed and
interpreted language. The code is run on a laptop with an Intel Core i5-2410M @
2.53 GHz processor with 4 GB of memory. The total execution time varies (10
secs to 10 mins) as a function of the complexity of the shapes to be
vectorized. It should be much faster when implemented in a static language.
Moreover, our method can be highly parallelized by virtue of the nature of
wavelet rasterization, which may also considerably improve the efficiency.

\subsection{Quantitative Comparisons}

We use a fidelity metric to quantitatively compare our results
with those of other methods. The fidelity metric generally provides a good
indication of the characteristics that define a good vectorization algorithm.
To further evaluate the proposed method based on human aesthetic judgment, we
also present a user study.

For both comparisons, we collected a set of clipart images available in both
raster and vector formats. All the raster images served as inputs to our
algorithm and to the other vectorization methods. Some methods considered in
the comparison require parameter tuning. To perform a fair comparison, the
dominant parameters of these methods were adjusted until the number of bezigons
produced as output were approximately equal to the number of bezigons in the
ground-truth vector.  Then, we compared the vector images resulting from the
different methods with respect to fidelity and user satisfaction. The details
of both comparisons are presented below.

\textbf{Comparison via peak signal-to-noise ratio measurement.} The quality of a
vectorization is often evaluated in terms of the PSNR (peak signal-to-noise
ratio) or RMSE (root-mean-square error)
\cite{diebel2008bayesian,zhang2009vectorizing}. Before evaluation,
both the resultant vector image and the ground-truth image were rasterized at a
specific resolution.

Figure~\ref{fig:PSNR} presents the histograms of the the increases in the PSNR
that were achieved by our method with respect to Vector Magic
(Figure~\ref{fig:PSNR}a) and Adobe Illustrator (Figure~\ref{fig:PSNR}b). It is
evident that our method consistently yields a higher PSNR compared with
competing methods. More specifically, our results reveal an increase in the
PSNR of 0-5 dB with respect to Vector Magic and an increase of 10-20 dB with
respect to Adobe Illustrator.

\begin{figure}[!t]
\centering
    \subfigure[] {\includegraphics[width=1.6in]{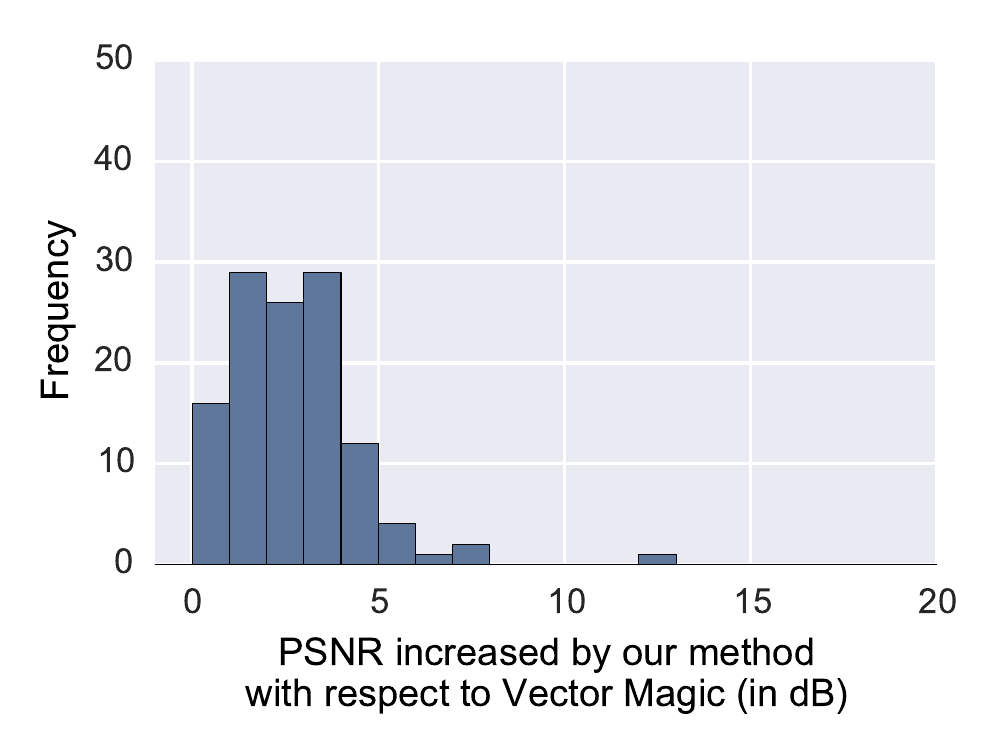}}
    \subfigure[] {\includegraphics[width=1.6in]{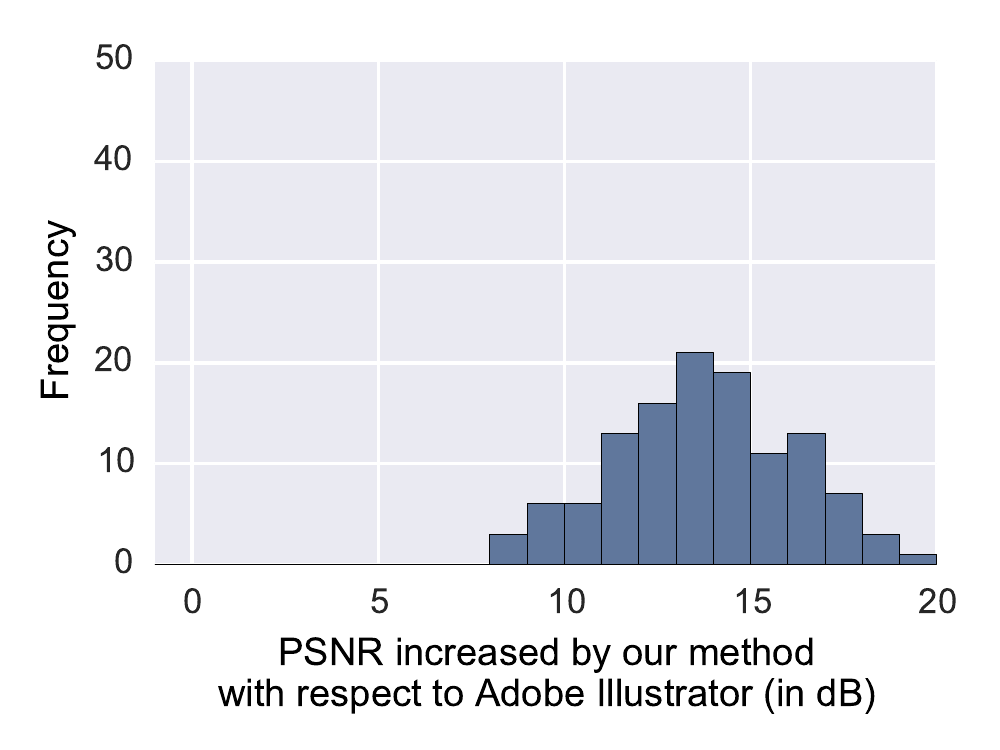}}
\caption{Histograms of the increases in the PSNR achieved by our method
compared with (a) Vector Magic and (b) Adobe Illustrator.}
\label{fig:PSNR}
\end{figure}

\textbf{Comparison via a user study.} We also performed a user study to obtain
a further evaluation based on human aesthetic judgment. For this purpose, a
pairwise comparison test was created. We prepared 120 pairs of vector results.
Each pair consisted of a vector image generated by our method and a vector
image generated by another method (either Vector Magic or Adobe Illustrator).
We constructed a web interface to show each pair of vector images, including
their control points, but with no creator vectorizer name attached. Several
participants with graphic design backgrounds were then asked to determine
whether one image was much better than, better than, almost the same as, a
little worse than, or much worse than another image in comparison with the
ground truth. The statistical results are presented in
Figure~\ref{fig:userstudy}. This figure indicates that our results were
considered to be superior those of the current state-of-the-art method (Vector
Magic) in nearly 80\% of the pairwise comparisons. Approximately one quarter of
the images were deemed to be much better (Figure~\ref{fig:userstudy}a).
Compared with the representative commercial software (Adobe Illustrator),
almost all of our results were considered to be better, and half of them were
judged to be much better.

\begin{figure}[!t]
\centering
    \subfigure[] {\includegraphics[width=1.6in]{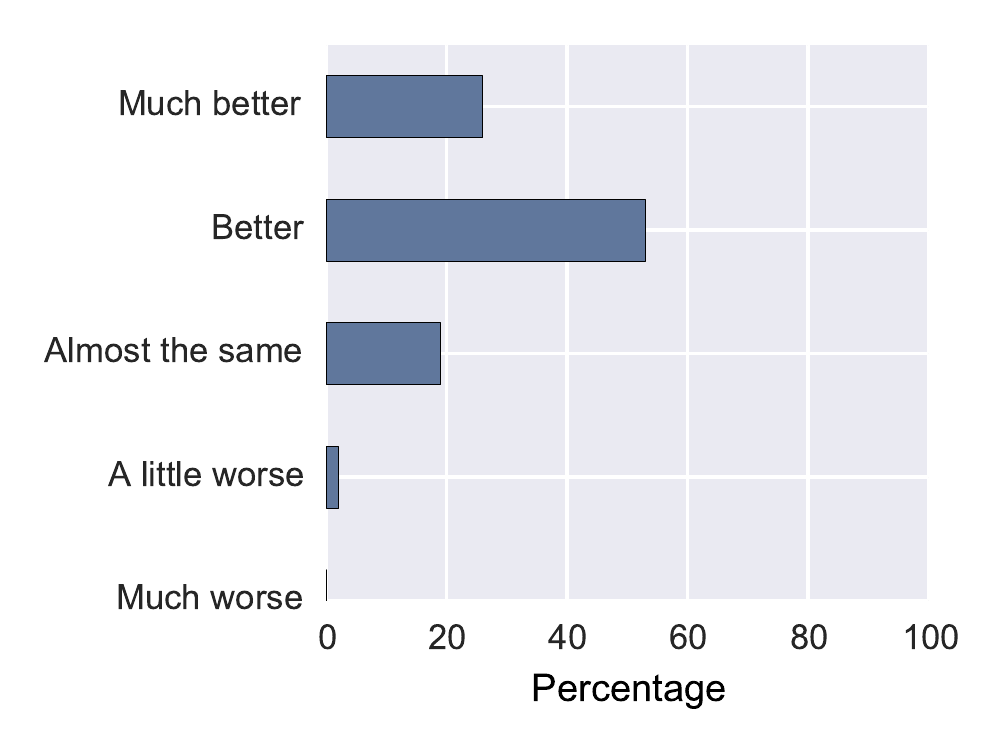}}
    \subfigure[] {\includegraphics[width=1.6in]{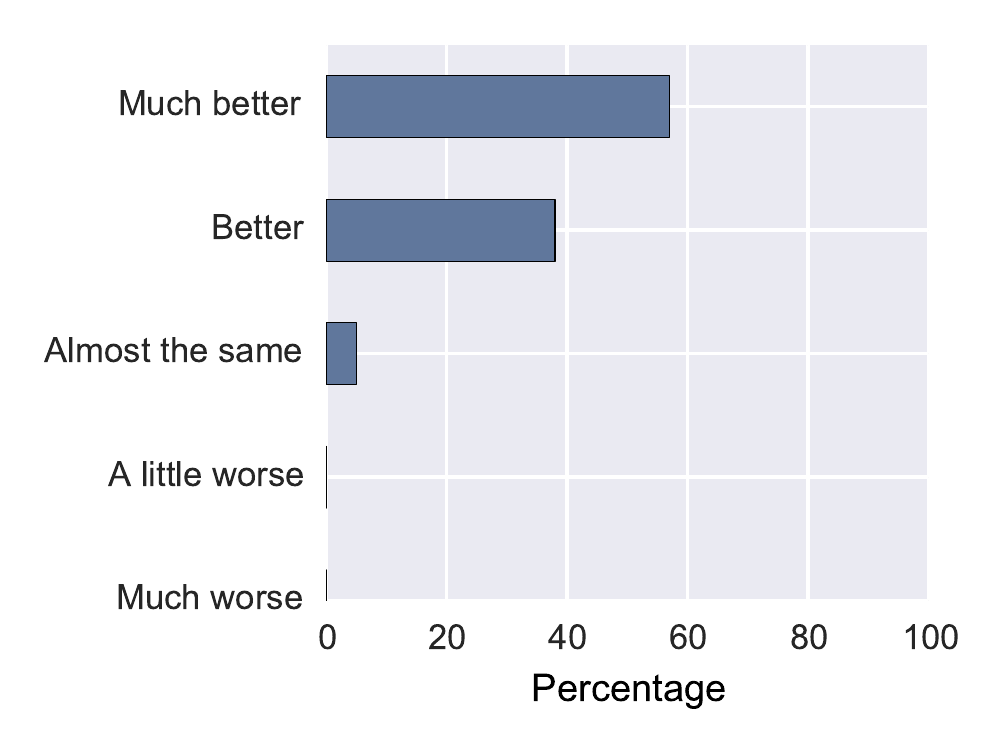}}
\caption{Results from a user study that compared our results with those of (a)
Vector Magic and (b) Adobe Illustrator.}
\label{fig:userstudy}
\end{figure}

To summarize the quantitative comparisons, our approach is found to be superior
to both the state-of-the-art algorithm and the representative commercial tool
in terms of both fidelity and user satisfaction.

\subsection{Qualitative Comparisons}

For qualitative comparison of our method with the other methods, we provide a
few results
(Figures~\ref{fig:visual-comparison-smooth}-\ref{fig:visual-comparison-bent})
obtained using our approach and the competing methods. Because of space
limitations, we highlight only one local patch for each image (shown in the
even rows of
Figures~\ref{fig:visual-comparison-smooth}-\ref{fig:visual-comparison-bent}).
From the comparison, we observe that our results, in general, are more faithful
to the raster input and that the shapes of the resultant bezigons are more
reasonable and visually pleasing. More specifically, our strengths lie in the
following cases.

\textbf{Case 1: smooth boundary with high curvature.} In a typical
clipart image, smooth boundaries with high curvature are often found in the
round corners of a shape (Figure~\ref{fig:visual-comparison-smooth}j and
Figure~\ref{fig:visual-comparison-smooth}t). Traditional methods typically use
a chain of densely sampled points to represent such structures and subsequently
fit B\'ezier curves to the chain. The problem with this approach is that
reconstruction of curves from such an intermediate representation can be
excessively ambiguous in such regions. Therefore, the resultant bezigons
exhibit false corners (see the redundant corners in
Figures~\ref{fig:visual-comparison-smooth}g,
\ref{fig:visual-comparison-smooth}h, and
\ref{fig:visual-comparison-smooth}r). Instead, our method directly infers
the bezigons from the raster input to reduce such ambiguities to the greatest
possible extent. Consequently, our bezigons contain fewer false sharp corners
(Figure~\ref{fig:visual-comparison-smooth}i and
\ref{fig:visual-comparison-smooth}s).

\begin{figure}[!t]
\centering
    \subfigure[] {\includegraphics[width=0.6in]
                 {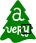}}
    \subfigure[] {\includegraphics[width=0.6in]
                 {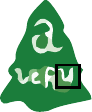}}
    \subfigure[] {\includegraphics[width=0.6in]
                 {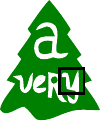}}
    \subfigure[] {\includegraphics[width=0.6in]
                 {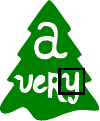}}
    \subfigure[] {\includegraphics[width=0.6in]
                 {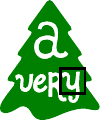}}

    \subfigure[] {\includegraphics[width=0.6in]
                 {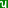}}
    \subfigure[] {\includegraphics[width=0.6in]
                 {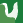}}
    \subfigure[] {\includegraphics[width=0.6in]
                 {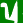}}
    \subfigure[] {\includegraphics[width=0.6in]
                 {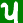}}
    \subfigure[] {\includegraphics[width=0.6in]
                 {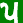}}

    \subfigure[] {\includegraphics[width=0.6in]
                 {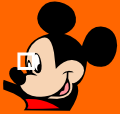}}
    \subfigure[] {\includegraphics[width=0.6in]
                 {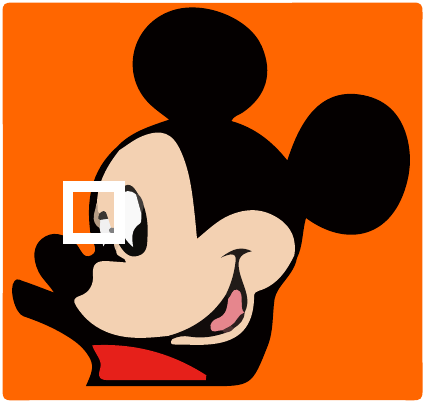}}
    \subfigure[] {\includegraphics[width=0.6in]
                 {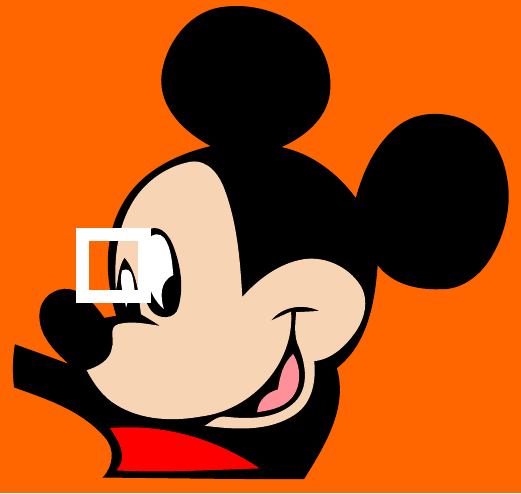}}
    \subfigure[] {\includegraphics[width=0.6in]
                 {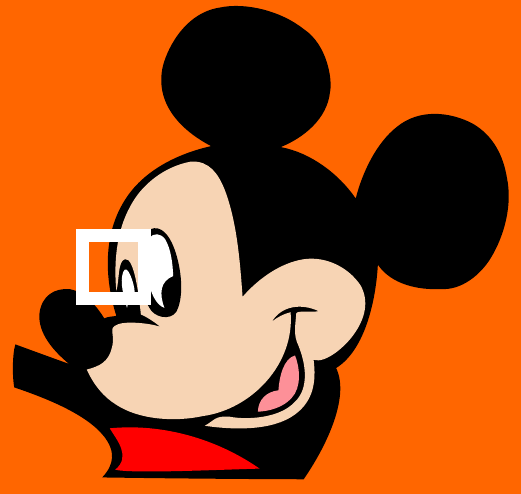}}
    \subfigure[] {\includegraphics[width=0.6in]
                 {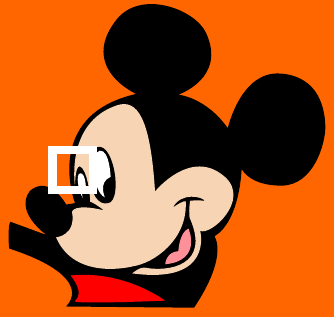}}

    \subfigure[] {\includegraphics[width=0.6in]
                 {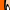}}
    \subfigure[] {\includegraphics[width=0.6in]
                 {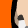}}
    \subfigure[] {\includegraphics[width=0.6in]
                 {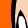}}
    \subfigure[] {\includegraphics[width=0.6in]
                 {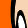}}
    \subfigure[] {\includegraphics[width=0.6in]
                 {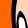}}
\caption{Comparisons in cases of smooth boundaries with high curvature. From
left to right: raster input, Adobe's result, Vector Magic's result, our result,
ground truth.}
\label{fig:visual-comparison-smooth}
\end{figure}

\textbf{Case 2: obtuse corners.} Many shapes to be vectorized contain obtuse
corners (Figures~\ref{fig:visual-comparison-corner}j and
\ref{fig:visual-comparison-corner}t). Preserving such corners is very
important even when the result is visually satisfactory because it can be
difficult to subsequently edit the vectorized shapes. However, for the same
reason discussed in Case 1, traditional methods tend to smooth out such corners
or yield a curve endpoint with an incorrect location (e.g., the overly smoothed
boundaries in Figures~\ref{fig:visual-comparison-corner}h,
\ref{fig:visual-comparison-corner}q and
\ref{fig:visual-comparison-corner}r). Our method benefits from the
direct optimization of the bezigons and avoids errors introduced by fitting
sampled points that cannot accurately indicate the correct location of an
obtuse corner. Therefore, our results typically preserve more obtuse corners
(Figures~\ref{fig:visual-comparison-corner}i and
\ref{fig:visual-comparison-corner}s).

\begin{figure}[!t]
\centering

    \subfigure[] {\includegraphics[width=0.6in]
                 {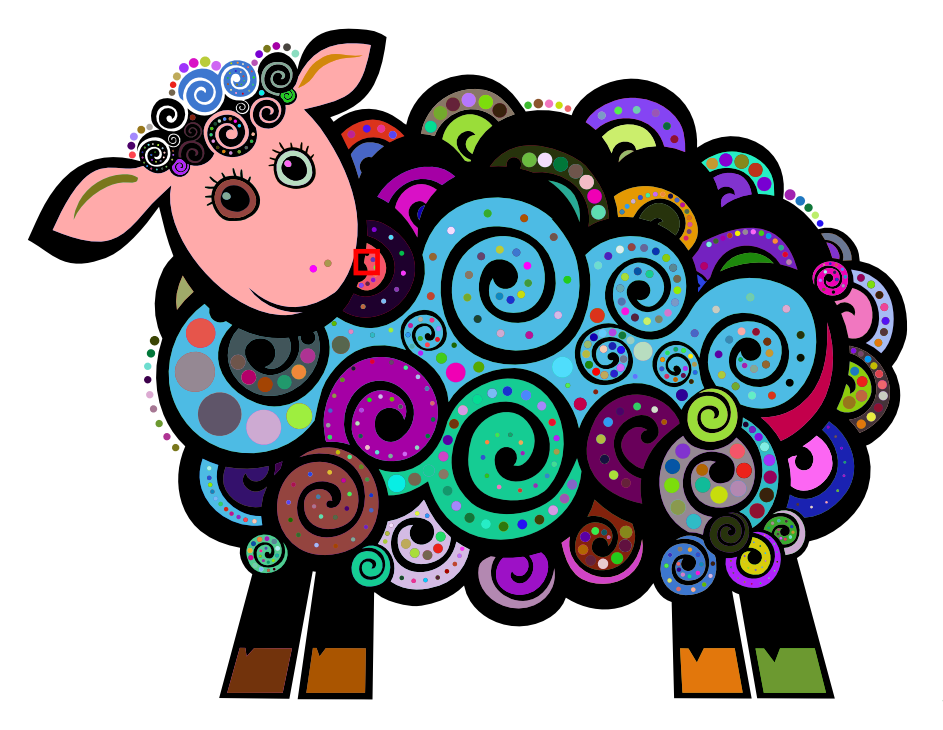}}
    \subfigure[] {\includegraphics[width=0.6in]
                 {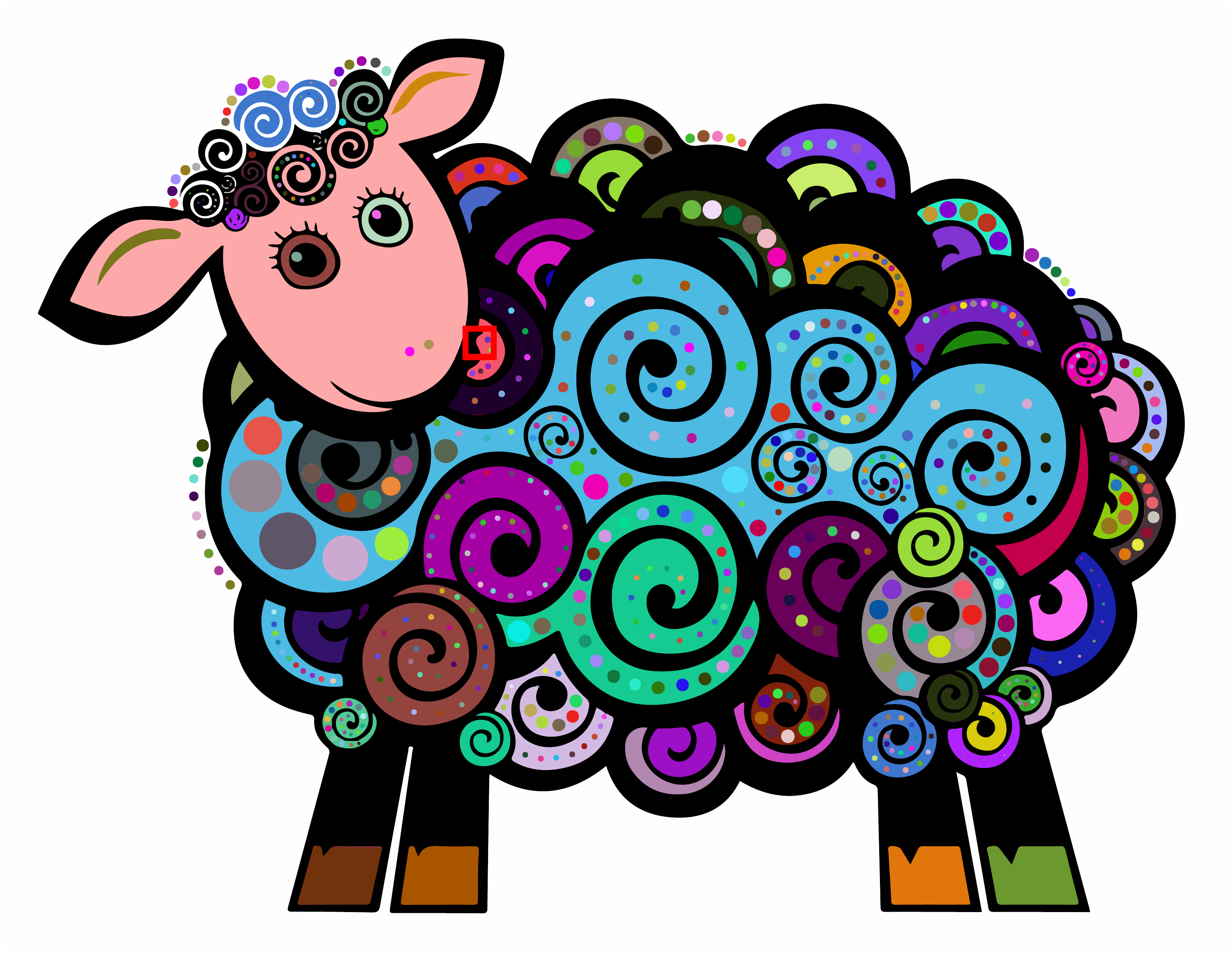}}
    \subfigure[] {\includegraphics[width=0.6in]
                 {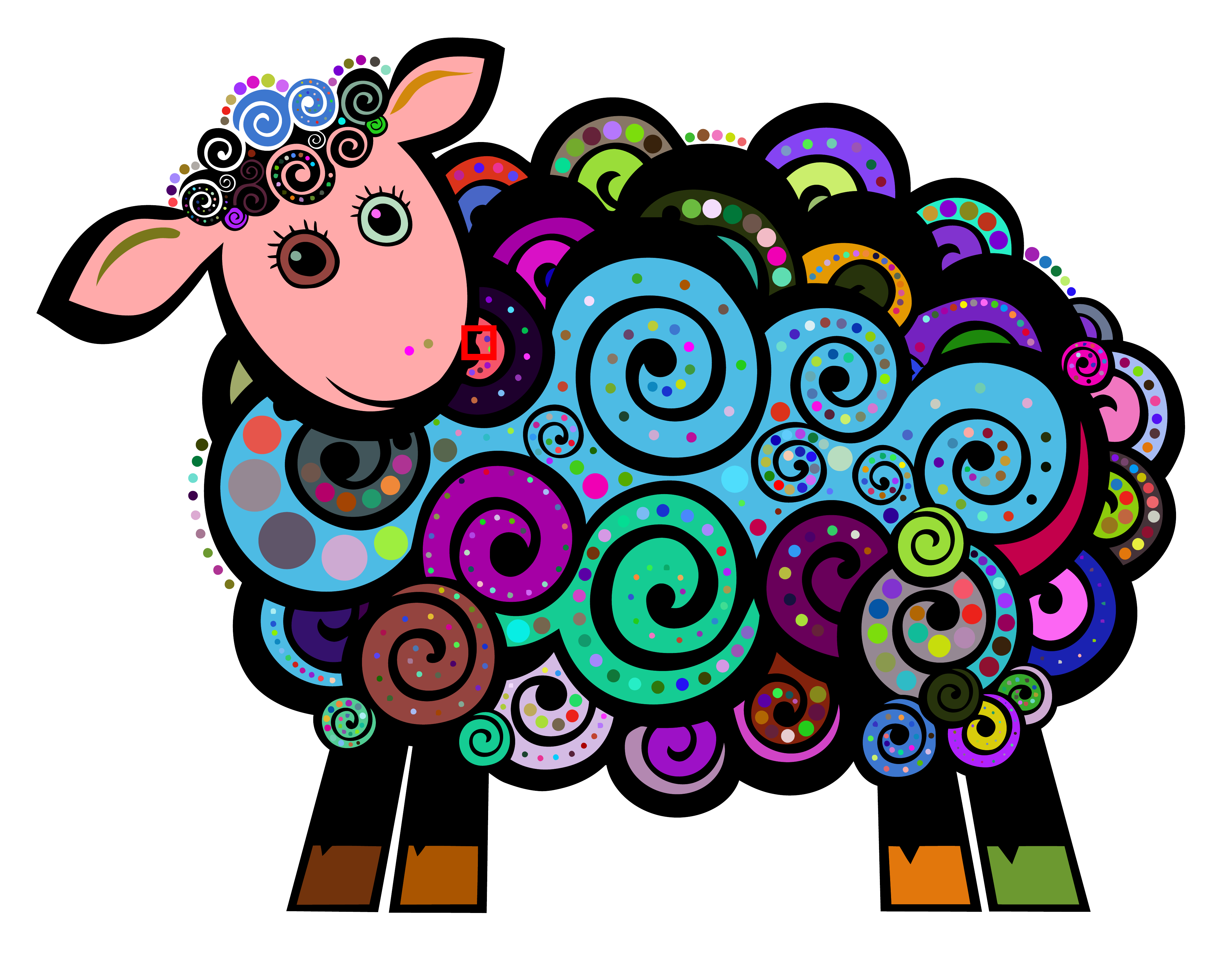}}
    \subfigure[] {\includegraphics[width=0.6in]
                 {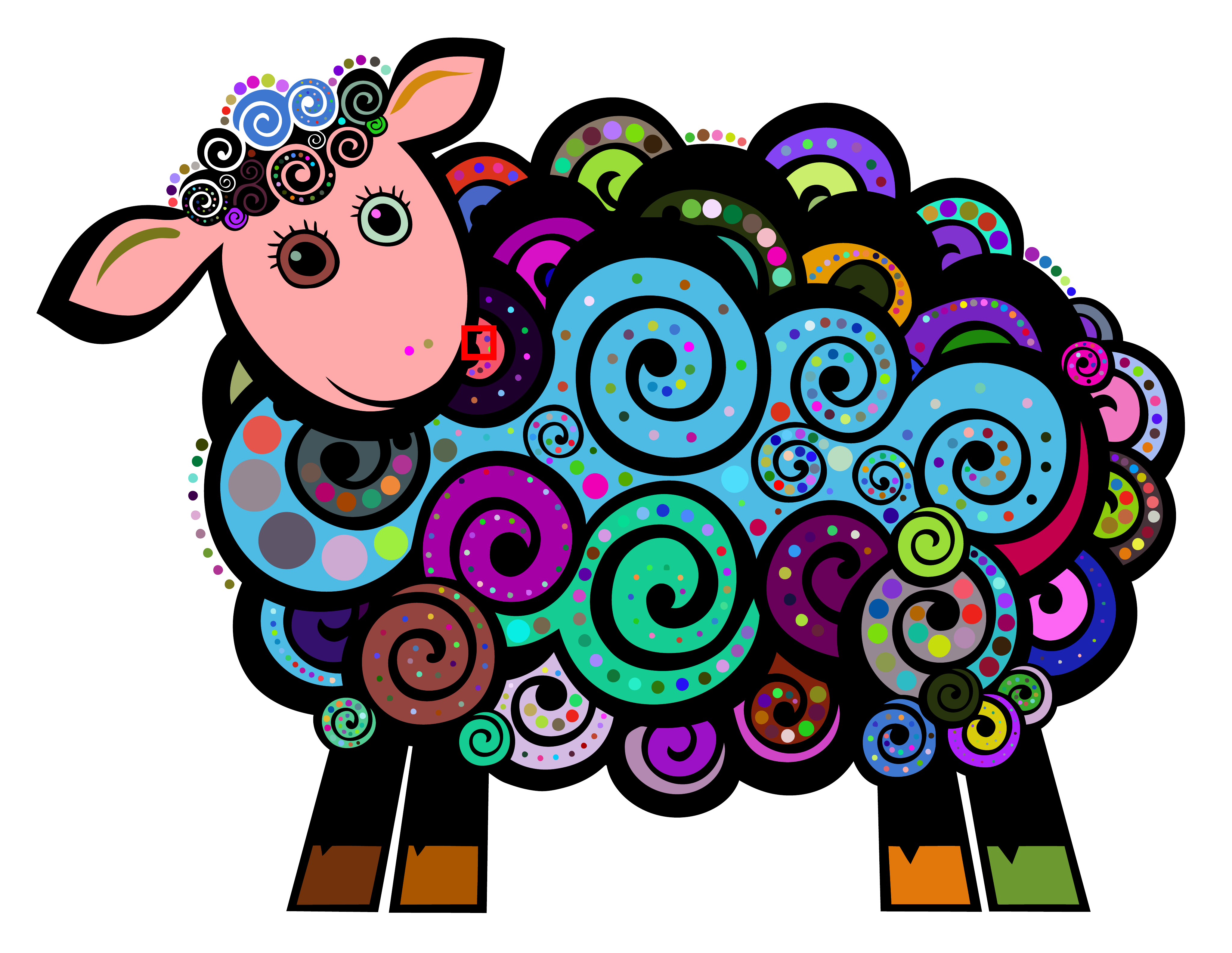}}
    \subfigure[] {\includegraphics[width=0.6in]
                 {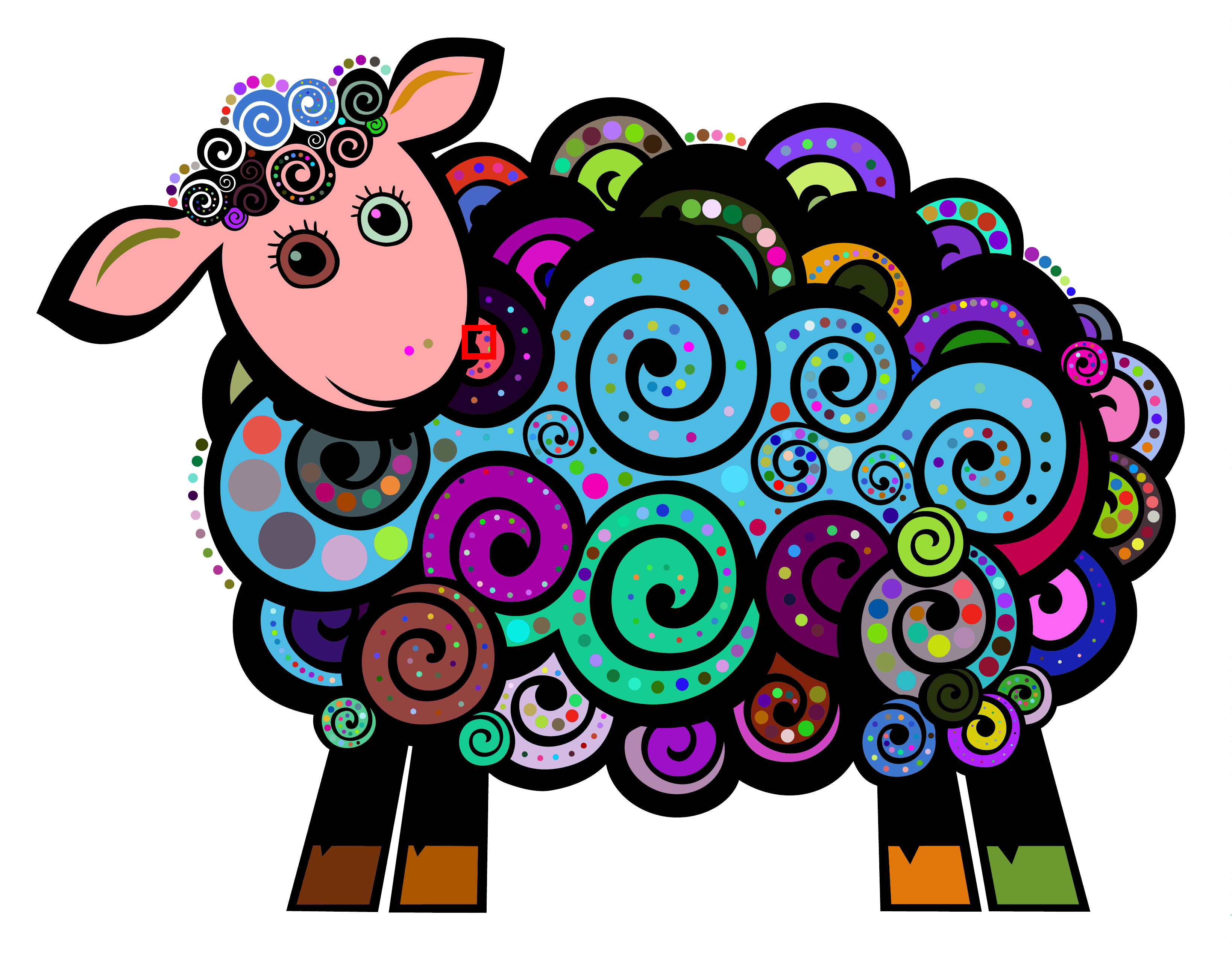}}

    \subfigure[] {\includegraphics[width=0.6in]
                 {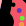}}
    \subfigure[] {\includegraphics[width=0.6in]
                 {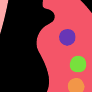}}
    \subfigure[] {\includegraphics[width=0.6in]
                 {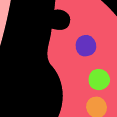}}
    \subfigure[] {\includegraphics[width=0.6in]
                 {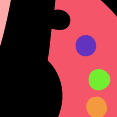}}
    \subfigure[] {\includegraphics[width=0.6in]
                 {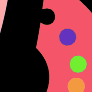}}

    \subfigure[] {\includegraphics[width=0.6in]
                 {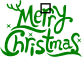}}
    \subfigure[] {\includegraphics[width=0.6in]
                 {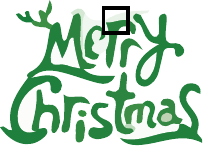}}
    \subfigure[] {\includegraphics[width=0.6in]
                 {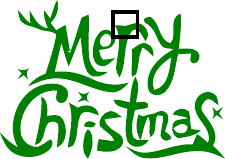}}
    \subfigure[] {\includegraphics[width=0.6in]
                 {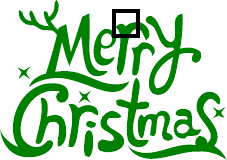}}
    \subfigure[] {\includegraphics[width=0.6in]
                 {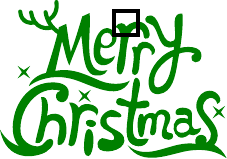}}

    \subfigure[] {\includegraphics[width=0.6in]
                 {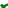}}
    \subfigure[] {\includegraphics[width=0.6in]
                 {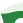}}
    \subfigure[] {\includegraphics[width=0.6in]
                 {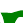}}
    \subfigure[] {\includegraphics[width=0.6in]
                 {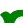}}
    \subfigure[] {\includegraphics[width=0.6in]
                 {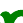}}

\caption{Comparisons in cases of obtuse corners. From left to right: raster
input, Adobe's result, Vector Magic's result, our result, ground truth.}
\label{fig:visual-comparison-corner}
\end{figure}

\textbf{Case 3: slightly bent edges.} Vectorizing various detailed structures,
such as slightly bent edges (Figures~\ref{fig:visual-comparison-bent}j and
\ref{fig:visual-comparison-bent}t), is also difficult for traditional
methods. Because the error associated with the generation of an intermediate
representation is unavoidable, small perturbations of the point chain are often
considered to be noise rather than signal. Therefore, certain slightly bent
edges in the resultant bezigons are straightened
(Figures~\ref{fig:visual-comparison-bent}g, \ref{fig:visual-comparison-bent}h
and \ref{fig:visual-comparison-bent}r). In our framework, we trust only the
original raster input and any prior knowledge regarding the curves. Although
not every type of structure can be preserved (e.g., large perturbations or
zigzag-like structures may be suppressed based on a priori knowledge of typical
vector images), small shapes and slightly bent edges are more likely to be
preserved in our results (Figures~\ref{fig:visual-comparison-bent}i and
\ref{fig:visual-comparison-bent}s).

\begin{figure}[!t]
\centering

    \subfigure[] {\includegraphics[width=0.6in]
                 {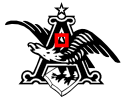}}
    \subfigure[] {\includegraphics[width=0.6in]
                 {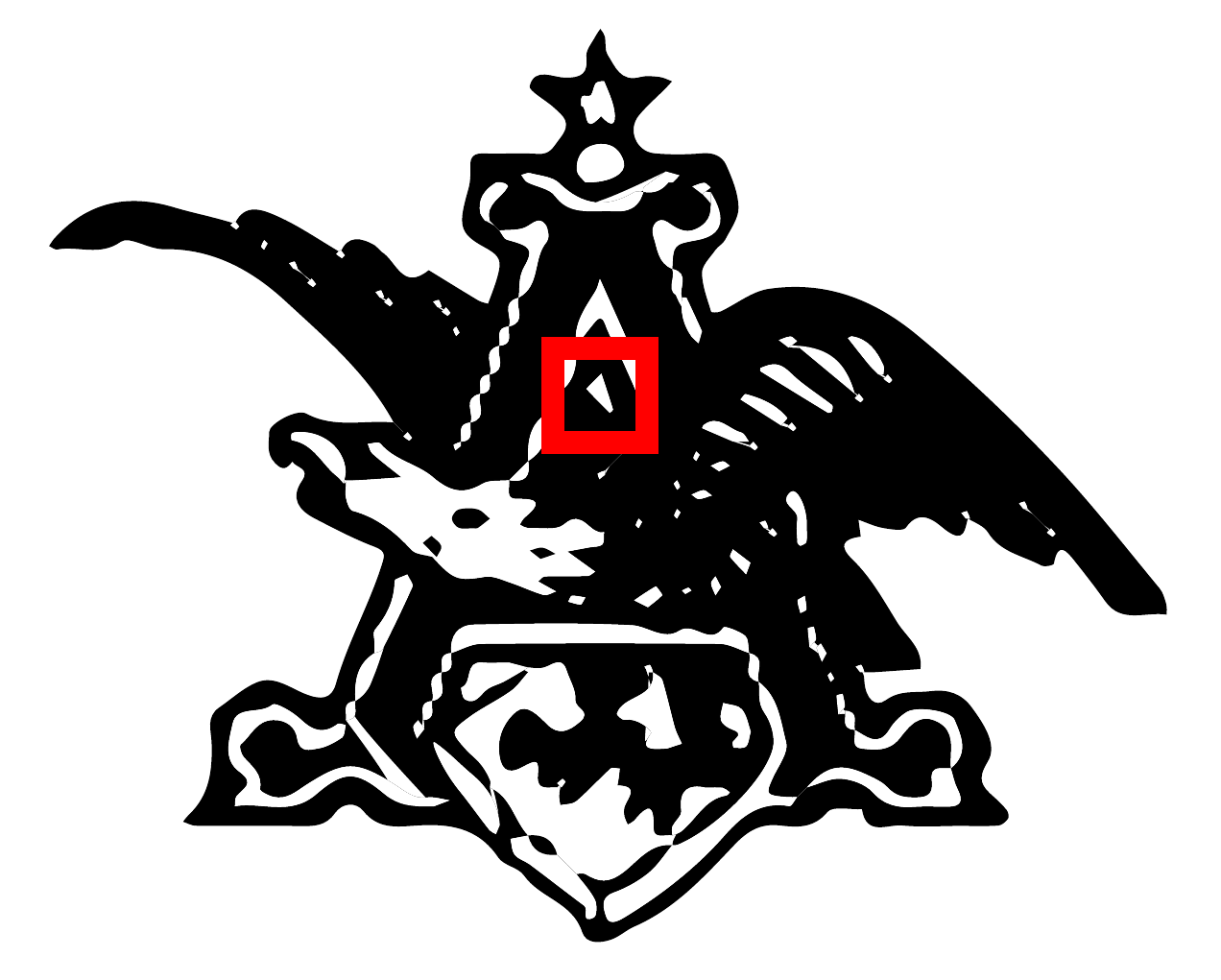}}
    \subfigure[] {\includegraphics[width=0.6in]
                 {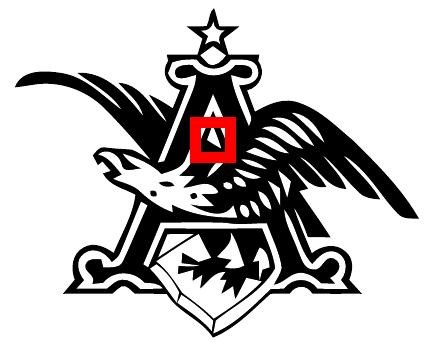}}
    \subfigure[] {\includegraphics[width=0.6in]
                 {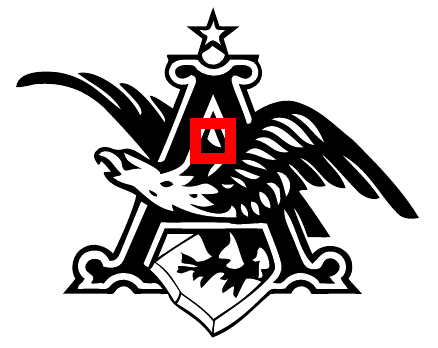}}
    \subfigure[] {\includegraphics[width=0.6in]
                 {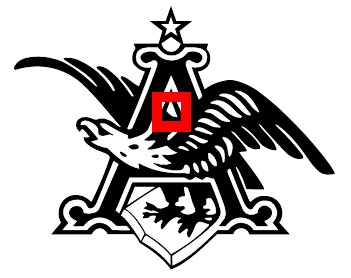}}

    \subfigure[] {\includegraphics[width=0.6in]
                 {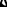}}
    \subfigure[] {\includegraphics[width=0.6in]
                 {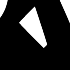}}
    \subfigure[] {\includegraphics[width=0.6in]
                 {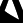}}
    \subfigure[] {\includegraphics[width=0.6in]
                 {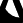}}
    \subfigure[] {\includegraphics[width=0.6in]
                 {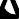}}

    \subfigure[] {\includegraphics[width=0.6in]
                 {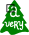}}
    \subfigure[] {\includegraphics[width=0.6in]
                 {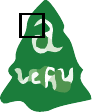}}
    \subfigure[] {\includegraphics[width=0.6in]
                 {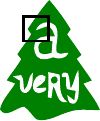}}
    \subfigure[] {\includegraphics[width=0.6in]
                 {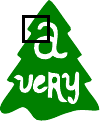}}
    \subfigure[] {\includegraphics[width=0.6in]
                 {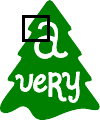}}

    \subfigure[] {\includegraphics[width=0.6in]
                 {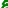}}
    \subfigure[] {\includegraphics[width=0.6in]
                 {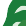}}
    \subfigure[] {\includegraphics[width=0.6in]
                 {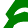}}
    \subfigure[] {\includegraphics[width=0.6in]
                 {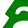}}
    \subfigure[] {\includegraphics[width=0.6in]
                 {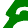}}

\caption{Comparisons in cases of slightly bent edges. From left to right:
raster input, Adobe's result, Vector Magic's result, our result, ground truth.}
\label{fig:visual-comparison-bent}
\end{figure}

From the analysis presented above, we can conclude that our direct bezigon
optimization for image vectorization produces more convincing vector results in
most cases.

\section{Conclusion and Future Work}
\label{sec:conclusion}

We have presented a novel framework for clipart image vectorization. In
contrast to other methods, the proposed approach optimizes bezigons by directly
observing the raster input and incorporating bezigon-based priors to minimize
the errors introduced by other intermediate procedures. Both quantitative and
qualitative comparisons demonstrate that the quality of the bezigons generated
by our approach is typically higher compared with those generated by the
current state-of-the-art method and by commonly used commercial software.

Of course, certain types of clipart images (e.g., noisy images or
low-resolution images that contain complex structures) exist that are too
ambiguous to be precisely vectorized by any automated approach, including our
method (Figure~\ref{fig:failure-results} shows such cases). Perhaps the best way to address these images is to incorporate a small
amount of user intervention. For this purpose, our system provides a friendly
graphical interface for user refinement during the course of bezigon
optimization. The evolving bezigons are presented in the interface. The user is
allowed to modify the location of any control point by dragging the mouse
cursor. Our system takes the modified bezigon as a new initial bezigon and
performs the subsequent optimization.

\begin{figure}[!t]
\centering

    \subfigure[] {\includegraphics[height=0.5in]
                 {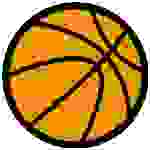}}
    \subfigure[] {\includegraphics[height=0.5in]
                 {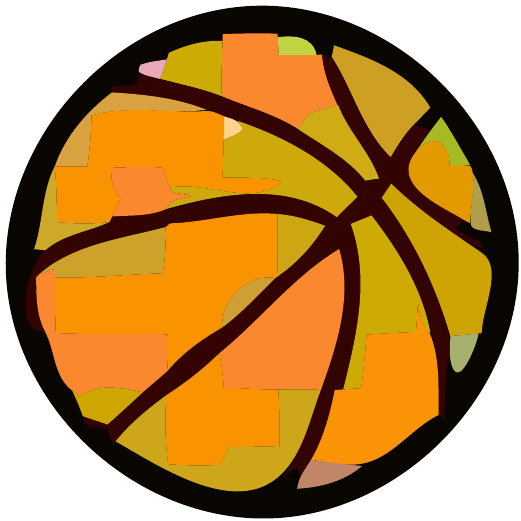}}
    \subfigure[] {\includegraphics[height=0.5in]
                 {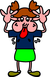}}
    \subfigure[] {\includegraphics[height=0.5in]
                 {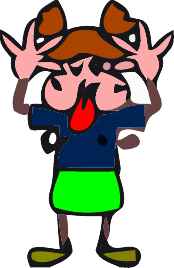}}

\caption{Cases producing poor results. (a) A noisy input image. (c) A low-resolution input image with complex content. (b)(d) Our vectorized output. }
\label{fig:failure-results}
\end{figure}

As future research, we will resolve additional ambiguities by incorporating
more prior knowledge regarding vector images for bezigon optimization. Because
we directly optimize the bezigons, it is trivial to incorporate such prior
information into our framework.

We also plan to develop a commercial software package based on the proposed
method. To make the software as efficient as possible, we will optimize the
code of the current implementation and consider parallelization of the proposed
approach. Remarkably, many components of our framework, ranging from the
wavelet rasterization to the optimization of local structures that are
completely irrelevant to each other, can be highly parallelized.

\ifCLASSOPTIONcompsoc
  \section*{Acknowledgments}
\else
  \section*{Acknowledgment}
\fi

This research project has been underway for nearly three years. We would like
to thank all the participants, especially two professional graphical designers,
Beck Yang and Fanco Ke, for their helpful suggestions and valuable comments.

\appendices
\label{sec:appendix}
\section{}

In this appendix we will introduce the complete definition of the rasterization function $R_{MS} (W;x,y)$, where $W$ are the parameter set of a bezigon, and give a proof to illustrate the continuity and differentiability of this function with respect to the geometrical parameters.

\subsection{Basic Definitions}
\label{subsec:appendix-definitions}

Before describing the rasterization function, we introduce some basic definitions that will be needed throughout this section.

Based on \cite{manson2011wavelet}, $R_{MS} (W;x,y)$ uses a hierarchical Haar wavelet representation to analytically calculate an anti-aliased raster image of a bezigon. Haar wavelets, as is well known, are represented by its mother wavelet function
\begin{equation}
\psi(t)=
\begin{cases}
1, & t \in [0,\frac{1}{2}], \\
-1, & t \in [\frac{1}{2},1), \\
0, & \text{otherwise}.
\end{cases}
\end{equation}
and its scaling function
\begin{equation}
\phi(t)=
\begin{cases}
1, & t \in [0,1), \\
0, & \text{otherwise}.
\end{cases}
\end{equation}

Based on the above two functions, the 1D Haar basis with a scaling parameter $s \in \mathbb{Z}$ and a translating parameter $l \in \mathbb{Z}$ could be formally defined as
\begin{eqnarray}
\psi_{s,k}(t) &=& \psi(2^s t - l), \qquad t \in \mathbb{R}, \\
\label{eq:6}
\phi_{s,k}(t) &=& \phi(2^s t - l), \qquad t \in \mathbb{R}.
\label{eq:7}
\end{eqnarray}

Now let $k=(k_x,k_y) \in \mathbb{Z}^2$, the 2D Haar basis defined as following will be used later:
\begin{eqnarray}
\psi_{s,k}^{(0,0)} (x,y) \!\!\!\! &=& \!\!\!\! 2^s \phi_{s,k_x} (x) \phi_{s,k_y} (y), \quad (x,y) \!\! \in \!\! \mathbb{R}^2, \\
\psi_{s,k}^{(0,1)} (x,y) \!\!\!\! &=& \!\!\!\! 2^s \phi_{s,k_x} (x) \psi_{s,k_y} (y), \quad (x,y) \!\! \in \!\! \mathbb{R}^2, \\
\psi_{s,k}^{(1,0)} (x,y) \!\!\!\! &=& \!\!\!\! 2^s \psi_{s,k_x} (x) \phi_{s,k_y} (y), \quad (x,y) \!\! \in \!\! \mathbb{R}^2, \\
\psi_{s,k}^{(1,1)} (x,y) \!\!\!\! &=& \!\!\!\! 2^s \psi_{s,k_x} (x) \psi_{s,k_y} (y), \quad (x,y) \!\! \in \!\! \mathbb{R}^2.
\end{eqnarray}

\subsection{Rasterization Function $R_{MS} (W;x,y)$ And Its Continuity}
\label{subsec:appendix-continuity}

According to \cite{manson2011wavelet}, the value of pixel $(x,y)$ in the raster image of a given 2D bezigon, indicated by the parameters $W=(B,C)$, takes the form
\begin{equation} 
\begin{split}
&R_{MS}(W;x,y) = \\
&c(C;x,y) \sum_{j=1}^N
        \left\{\begin{split}
        \sum_{k \in K} c_{0,k}^{(0,0)} (B;j) \psi_{0,k}^{(0,0)} (x,y) \\
        + \sum_{s=0}^d \sum_{k \in K}
                \left[\begin{split}
                c_{s,k}^{(0,1)} (B;j) \psi_{s,k}^{(0,1)} (x,y) \\
                +c_{s,k}^{(1,0)} (B;j) \psi_{s,k}^{(1,0)} (x,y) \\
                +c_{s,k}^{(1,1)} (B;j) \psi_{s,k}^{(1,1)} (x,y)
                \end{split}\right]
        \end{split}\right\}, \\
&B \in \mathbb{R}^{6N}, C \in \mathbb{R}^3, (x,y) \in \Lambda, \\
&d \text{ is a given integer}, K \text{ is a finite set of } \mathbb{Z}^2.
\end{split}
\label{eq:12}
\end{equation}
Here, $c_{s,k}^{(\cdot)}(B;j)$ correspond to the wavelet coefficients contributed by the $j$-th B\'ezier curve segment:
\begin{equation}
\begin{split}
c_{s,k}^{(0,0)} (B;j) = & \int_0^1 2^s \tilde{\phi}_{s, k_x} (X_j(B_x;t)) \\ & \phi_{s, k_y} (Y_j(B_y;t)) Y'_j(B_y; t) dt, \\
c_{s,k}^{(0,1)} (B;j) = & \int_0^1-2^s \tilde{\psi}_{s, k_x} (Y_j(B_y;t)) \\ & \phi_{s, k_y} (X_j(B_x;t)) X'_j(B_x; t) dt, \\
c_{s,k}^{(1,0)} (B;j) = & \int_0^1 2^s \tilde{\psi}_{s, k_x} (X_j(B_x;t)) \\ & \phi_{s, k_y} (Y_j(B_y;t)) Y'_j(B_y; t) dt, \\
c_{s,k}^{(1,1)} (B;j) = & \int_0^1 2^s \tilde{\psi}_{s, k_x} (X_j(B_x;t)) \\ & \psi_{s, k_y} (Y_j(B_y;t)) Y'_j(B_y; t) dt.
\end{split}
\end{equation}

The notations $B_x, B_y$ and $X_j, Y_j (j=1,2,\hdots,N)$ are the same as Equation~\ref{eq:X_j-Y_j} and \ref{eq:B_x-B_y} in Section~\ref{sec:overview}. Note that given the bezigon parameters $B$, both $X_j$ and $Y_j$ are functions of one variable $t$, while both $X'_j$ and $Y'_j$ are first-order derivatives with respect to $t$. For all $s \in \mathbb{Z}$ and $l \in \mathbb{Z}$,
\begin{eqnarray}
\tilde{\phi}_{s,l}(t) = & \int_0^t \phi_{s,l} (u) du, \qquad t \in \mathbb{R}, \\
\label{eq:17}
\tilde{\psi}_{s,l}(t) = & \int_0^t \psi_{s,l} (u) du, \qquad t \in \mathbb{R}.
\label{eq:18}
\end{eqnarray}

It is obvious that both $\tilde{\phi}_{s,l}(t)$ and $\tilde{\psi}_{s,l}(t)$ are continuous with respect to the variable $t$ respectively. Also, if $t=h(B)$ is a continuous function of any parameters of $B$, both $\tilde{\phi}_{s,l}(t)$ and $\tilde{\psi}_{s,l}(t)$ are too. From Equation~\ref{eq:X_j-Y_j}, it is easy to see that both $X_j(B_x;t)$ and $Y_j(B_y;t)$ are continuous with respect to any parameters of $B_x$ and $B_y$. Therefore, $c_{s,k}^{(\cdot)} (B;j)$ are also continuous with respect to $B$. Thus the continuity of $R_{MS} (W;x,y)$ with respect to geometrical parameters $B$ is totally determined by above discussion and its formula~\ref{eq:12}. Such property is also reflected in Figure~\ref{fig:data-energy-comparison}, where the data energy function using $R_{MS} (W;x,y)$ is continuous with respect to an arbitrary geometrical parameter.

\subsection{Derivatives of $R_{MS} (W;x,y)$ with respect to geometrical parameters}
\label{subsec:appendix-derivatives}

We will show that $R_{MS} (W;x,y)$ is differentiable with respect to the geometrical parameters B, which verifies Theorem~\ref{theorem:differentiability} in Section~\ref{sec:directly-optimizing-bezigon}. Since the discontinuity of Haar function, the conclusion of Theorem~\ref{theorem:differentiability} is not obvious. To achieve this goal, we will use the theory of generalized functions and generalized derivatives \cite{gel1968generalized}. Following deductions are all in the sense of generalized function and generalized derivative.

We first express formally such derivatives as
\begin{equation}
\label{eq:20}
\begin{split}
&\frac{\partial R_{MS}(W;x,y)}{\partial x_{j,i}} = \\
&\sum_{j=1}^N
        \left\{\begin{split}
        \sum_{k \in K} \frac{\partial}{\partial x_{j,i}} c_{0,k}^{(0,0)} (B;j) \psi_{0,k}^{(0,0)} (x,y) \\
        + \sum_{s=0}^d \sum_{k \in K}
                \left[\begin{split}
                \frac{\partial}{\partial x_{j,i}} c_{s,k}^{(0,1)} (B;j) \psi_{s,k}^{(0,1)} (x,y) \\
                +\frac{\partial}{\partial x_{j,i}} c_{s,k}^{(1,0)} (B;j) \psi_{s,k}^{(1,0)} (x,y) \\
                +\frac{\partial}{\partial x_{j,i}} c_{s,k}^{(1,1)} (B;j) \psi_{s,k}^{(1,1)} (x,y)
                \end{split}\right]
        \end{split}\right\}, \\
&B \in \mathbb{R}^{6N},
\end{split}
\end{equation}
and
\begin{equation}
\label{eq:21}
\begin{split}
&\frac{\partial R_{MS}(W;x,y)}{\partial y_{j,i}} = \\
&\sum_{j=1}^N
        \left\{\begin{split}
        \sum_{k \in K} \frac{\partial}{\partial y_{j,i}} c_{0,k}^{(0,0)} (B;j) \psi_{0,k}^{(0,0)} (x,y) \\
        + \sum_{s=0}^d \sum_{k \in K}
                \left[\begin{split}
                \frac{\partial}{\partial y_{j,i}} c_{s,k}^{(0,1)} (B;j) \psi_{s,k}^{(0,1)} (x,y) \\
                +\frac{\partial}{\partial y_{j,i}} c_{s,k}^{(1,0)} (B;j) \psi_{s,k}^{(1,0)} (x,y) \\
                +\frac{\partial}{\partial y_{j,i}} c_{s,k}^{(1,1)} (B;j) \psi_{s,k}^{(1,1)} (x,y)
                \end{split}\right]
        \end{split}\right\}, \\
&B \in \mathbb{R}^{6N}
\end{split}
\end{equation}
for all $j=1,2,\hdots,N$, $i=1,2,3,4$, and $(x,y) \in \Lambda$.

Then the remaining problem is to discuss the differentiability of Haar basis coefficients with respect to geometrical parameters, i.e., the existence of $\frac{\partial c_{s,k}^{(\cdot)} (B;j) }{\partial x_{j,i}}$ and $\frac{\partial c_{s,k}^{(\cdot)} (B;j) }{\partial y_{j,i}}$ for all $j=1,2,\hdots,N$, $i=1,2,3,4$, $s=0,1,\hdots,d$, and $k \in K$.

\textbf{Generalized Derivatives of Haar Basis Functions.} It is well known that the generalized derivative of $\phi(t)$:
\begin{equation}
\phi'(t)=\delta(t)-\delta(t-1), \qquad t\in \mathbb{R}.
\end{equation}
Here $\delta$ is an impulse function satisfying:
\begin{equation}
\label{eq:23}
\int_{-\infty}^{\infty} \delta(t) f(t) dt = f(0).
\end{equation}
Here $f(t)$ is an arbitrary continuous function. Note that when composed with a continuous function $g(t)$, $\delta$ holds the following property \cite{gel1968generalized}:
\begin{equation}
\label{eq:24}
\delta(g(t)) = \sum_{t_i \in T} \frac{\delta(t-t_i)}{|g'(t_i)|}, \qquad t \in \mathbb{R}.
\end{equation}
Here $T$ is the set of the real roots of $g(t)$.
Similarly,
\begin{equation}
\psi'(t)= \delta(t)-2\delta(t-\frac{1}{2})+\delta(t-1), \qquad t \in \mathbb{R}.
\end{equation}

Therefore, for all $s \in \mathbb{Z}$, $l \in \mathbb{Z}$,
\begin{equation}
\label{eq:26}
\begin{split}
\phi'_{s,l}(t) &= \frac{d (\phi (2^st-l))}{dt} \\
&= 2^s[ \delta(2^st-l) - \delta(2^st-l-1) ]
\end{split}, \quad t \in \mathbb{R}.
\end{equation}
Similarly, for all $s \in \mathbb{Z}$, $l \in \mathbb{Z}$,
\begin{equation}
\begin{split}
\psi'_{s,l}(t) &= 2^s[ \delta(2^st-l) - 2\delta(2^st-l-\frac{1}{2}) \\ &+ \delta(2^st-l-1)) ] \\
& \quad B \in \mathbb{R}^{6N}.
\end{split}
\end{equation}

\textbf{Derivatives of Haar Basis Coefficients with Respect to Geometrical Parameters.} We first calculate $\frac{\partial c_{s,k}^{(0,0)}}{\partial x_{j,i}}$. According to the generalized functions theory \cite{gel1968generalized}, for all $j=1,2,\hdots,N$, $i=1,2,3,4$, $s=0,1,..,d$, $k_x \in K_x$ and $k_y \in K_y$,
\begin{equation}
\label{eq:28}
\begin{split}
\frac{\partial c_{s,k}^{(0,0)} (B;j)}{\partial x_{j,i}} = & \int_0^1 \frac{\partial}{\partial x_{j,i}} [ 2^s \tilde{\phi}_{s,k_x} (X_j(B_x;t)) \\
& \phi_{s,k_y} (Y_j(B_y;t)) Y'_j(B_y;t) ] dt \\
& \quad B \in \mathbb{R}^{6N}.
\end{split}
\end{equation}
Since $\phi_{s,k_y} (Y_j(B_y;t)) Y'_j(B_y;t)$ has nothing to do with the parameter $x_{j,i}$ according to Equation~\ref{eq:X_j-Y_j}, we have
\begin{equation}
\label{eq:29}
\begin{split}
\frac{\partial c_{s,k}^{(0,0)} (B;j)}{\partial x_{j,i}} =& 2^s \int_0^1 \frac{\partial \tilde{\phi}_{s,k_x} (X_j(B_x;t))}{\partial x_{j,i}} \\
& \phi_{s,k_y} (Y_j(B_y;t)) Y'_j(B_y;t) dt \\
=& 2^s \int_0^1 \phi_{s,k_x} (X_j(B_x;t)) \frac{\partial X_j(B_x;t)}{\partial x_{j,i}} \\
& \phi_{s,k_y} (Y_j(B_y;t)) Y'_j(B_y;t) dt \\
& \quad B \in \mathbb{R}^{6N}.
\end{split}
\end{equation}

Thus, the derivative of $c_{s,k}^{(0,0)} (B;j)$ with respect to any value of $x_{j,i}$ exists . Also, it can be analytically calculated by substituting Equation~\ref{eq:X_j-Y_j} and Equation~\ref{eq:7} into Equation~\ref{eq:29}.

Now we turn to $\frac{\partial c_{s,k}^{(0,0)}}{\partial y_{j,i}}$. Similar to Equation~\ref{eq:28} and \ref{eq:29}, we have
\begin{equation}
\label{eq:30}
\begin{split}
\frac{\partial c_{s,k}^{(0,0)} (B;j)}{\partial y_{j,i}} =& \int_0^1 \frac{\partial}{\partial y_{j,i}} [ 2^s \tilde{\phi}_{s,k_x} (X_j(B_x;t)) \\
& \phi_{s,k_y} (Y_j(B_y;t)) Y'_j(B_y;t) ] dt \\
=& 2^s \int_0^1 \tilde{\phi}_{s,k_x} (X_j(B_x;t)) \\
&  \frac{\partial }{\partial y_{j,i}} \left[ \phi_{s,k_y} (Y_j(B_y;t)) Y'_j(B_y;t) \right] dt \\
& \quad B \in \mathbb{R}^{6N},
\end{split}
\end{equation}
for all $j=1,2,\hdots,N$, $i=1,2,3,4$, $s=0,1,..,d$, $k_x \in K_x$ and $k_y \in K_y$.
Here
\begin{equation}
\begin{split}
& \frac{\partial }{\partial y_{j,i}} \left[ \phi_{s,k_y} (Y_j(B_y;t)) Y'_j(B_y;t) \right] \\
= & \frac{\partial \phi_{s,k_y} (Y_j(B_y;t))}{\partial y_{j,i}} Y'_j(B_y;t)
+  \phi_{s,k_y} (Y_j(B_y;t)) \frac{\partial Y'_j(B_y;t)}{\partial y_{j,i}}
\end{split}
\end{equation}
According to Equation~\ref{eq:24} and \ref{eq:26} we have:
\begin{equation}
\label{eq:31}
\begin{split}
& \frac{\partial \phi_{s,k_y} (Y_j(B_y;t))}{\partial y_{j,i}} Y'_j(B_y;t) \\
= & \phi'_{s,k} (Y_j(B_y;t)) Y'_j(B_y;t) \\
= & \left[\begin{split}
       & 2^s \delta (2^s Y_j (B_y;t) - k_y) \\
       - & 2^s \delta (2^s Y_j(B_y;t) - k_y -1)
    \end{split}\right]
 Y'_j (B_y;t) \frac{\partial Y_j(B_y;t)}{\partial y_{j,i}} \\
= & \sum_{t_0 \in T_0} \frac{2^s \delta (t-t_0)}{|2^s Y'_j (B_y; t_0)|} Y'_j(B_y;t_0) \frac{\partial Y_j(B_y;t_0)}{\partial y_{j,i}} \\
- & \sum_{t_1 \in T_1} \frac{2^s \delta (t-t_1)}{|2^s Y'_j (B_y; t_1)|} Y'_j(B_y;t_1) \frac{\partial Y_j(B_y;t_1)}{\partial y_{j,i}} \\
= & \sum_{t_0 \in T_0} \delta (t-t_0) \text{sgn} (Y'_j (B_y; t_0)) \frac{\partial Y_j(B_y;t_0)}{\partial y_{j,i}} \\
- & \sum_{t_1 \in T_1} \delta (t-t_1) \text{sgn} (Y'_j (B_y; t_1)) \frac{\partial Y_j(B_y;t_1)}{\partial y_{j,i}}, \\
& \quad B \in \mathbb{R}^{6N},
\end{split}
\end{equation}
for all $j=1,2,\hdots,N$, $i=1,2,3,4$, $s=0,1,..,d$, and $k_y \in K_y$.
Here $T_0$ and $T_1$ are the sets of the real roots of
\begin{equation}
g_1 (t) = 2^s Y_j (B_y;t)-k_y, \qquad t \in [0,1]
\end{equation}
and
\begin{equation}
g_2 (t) = 2^s Y_j (B_y;t)-k_y-1, \qquad t \in [0,1],
\end{equation}
respectively. Note that either $g_1(t)=0$ or $g_2(t)=0$ is a cubic equation in one variable (i.e., $t$). By substituting Equation~\ref{eq:31} into Equation~\ref{eq:30}, there is
\begin{equation}
\label{eq:32}
\begin{split}
\frac{\partial c_{s,k}^{(0,0)} (B;j)}{\partial y_{j,i}} = & 2^s
        [ \sum_{t_0 \in T_0} \int_0^1 \delta (t-t_0) \tilde{\phi}_{s,k_x} (X_j(B_x;t)) \\
        & \text{sgn} (Y'_j (B_y; t_0)) \frac{\partial Y_j(B_y;t_0)}{\partial y_{j,i}} dt \\
      - & \sum_{t_1 \in T_1} \int_1^1 \delta (t-t_1) \tilde{\phi}_{s,k_x} (X_j(B_x;t)) \\
        & \text{sgn} (Y'_j (B_y; t_1)) \frac{\partial Y_j(B_y;t_1)}{\partial y_{j,i}} dt \\
      + & \int_0^1 \tilde{\phi}_{s,k_x} (X_j(B_x;t))  \phi_{s,k_y} (Y_j(B_y;t))  \\
        & \frac{\partial Y'_j(B_y;t)}{\partial y_{j,i}} dt
    ] \\
    & \quad B \in \mathbb{R}^{6N},
\end{split}
\end{equation}
for all $j=1,2,\hdots,N$, $i=1,2,3,4$, $s=0,1,..,d$, $k_x \in K_x$ and $k_y \in K_y$.
From Equation~\ref{eq:23} we have
\begin{equation}
\int_0^1 \delta(t-u) f(t) dt = f(u), \qquad u \in (0,1).
\end{equation}
Therefore Equation~\ref{eq:32} could be written as
\begin{equation}
\label{eq:34}
\begin{split}
& \frac{\partial c_{s,k}^{(0,0)} (B;j)}{\partial y_{j,i}} \\
= & 2^s \!\!  \left[\begin{split}
        \!\!  & \! \sum_{t_0 \in T_0} \!\!  \tilde{\phi}_{s,k_x} (X_j(B_x;t_0)) \text{sgn} (Y'_j (B_y; t_0)) \frac{\partial Y_j(B_y;t_0)}{\partial y_{j,i}} \! \\
      - \!\!  & \! \sum_{t_1 \in T_1} \!\!  \tilde{\phi}_{s,k_x} (X_j(B_x;t_1)) \text{sgn} (Y'_j (B_y; t_1)) \frac{\partial Y_j(B_y;t_1)}{\partial y_{j,i}} \! \\
      + \!\!  & \int_0^1 \tilde{\phi}_{s,k_x} (X_j(B_x;t))  \phi_{s,k_y} (Y_j(B_y;t))  \frac{\partial Y'_j(B_y;t)}{\partial y_{j,i}} dt
    \end{split} \right] \\
    & \quad B \in \mathbb{R}^{6N},
\end{split}
\end{equation}
for all $j=1,2,\hdots,N$, $i=1,2,3,4$, $s=0,1,..,d$, $k_x \in K_x$ and $k_y \in K_y$.
Therefore the derivative of $c_{s,k}^{(0,0)} (B;j)$ with respect to $y_{j,i}$ exists. And it can be analytically calculated by substituting Equation~\ref{eq:X_j-Y_j}, Equation~\ref{eq:7} and Equation~\ref{eq:17} into Equation~\ref{eq:34}.

Similarly, for all $B \in \mathbb{R}^{6N}$, $j=1,2,\hdots,N$, $i=1,2,3,4$, $s=0,1,..,d$, $k_x \in K_x$ and $k_y \in K_y$, we can compute the remaining derivatives:
\begin{equation}
\begin{split}
& \frac{\partial c_{s,k}^{(0,1)} (B;j)}{\partial x_{j,i}} \\
= & 2^s \!\! \left[\begin{split}
      - \!\!  & \! \sum_{t_0 \in T_0} \!\!  \tilde{\psi}_{s,k_y} (Y_j(B_y;t_0)) \text{sgn} (X'_j (B_x; t_0)) \frac{\partial X_j(B_x;t_0)}{\partial x_{j,i}} \! \\
      + \!\!  & \! \sum_{t_1 \in T_1} \!\!  \tilde{\psi}_{s,k_y} (Y_j(B_y;t_1)) \text{sgn} (X'_j (B_x; t_1)) \frac{\partial X_j(B_x;t_1)}{\partial x_{j,i}} \! \\
      - \!\!  & \int_0^1 \tilde{\psi}_{s,k_y} (Y_j(B_y;t))  \phi_{s,k_x} (X_j(B_x;t))  \frac{\partial X'_j(B_x;t)}{\partial x_{j,i}} dt
    \end{split} \right],
\end{split}
\end{equation}
\begin{equation}
\begin{split}
\frac{\partial c_{s,k}^{(0,1)} (B;j)}{\partial y_{j,i}} =-2^s \int_0^1 & \psi_{s,k_y} (Y_j(B_y;t)) \frac{\partial Y_j(B_y;t)}{\partial y_{j,i}} \\
& \phi_{s,k_x} (X_j(B_x;t)) X'_j(B_x;t) dt,
\end{split}
\end{equation}
\begin{equation}
\begin{split}
\frac{\partial c_{s,k}^{(1,0)} (B;j)}{\partial x_{j,i}} =2^s \int_0^1 & \psi_{s,k_x} (X_j(B_x;t)) \frac{\partial X_j(B_x;t)}{\partial x_{j,i}} \\
& \phi_{s,k_y} (Y_j(B_y;t)) Y'_j(B_y;t) dt,
\end{split}
\end{equation}
\begin{equation}
\begin{split}
& \frac{\partial c_{s,k}^{(1,0)} (B;j)}{\partial y_{j,i}} \\
= & 2^s \!\! \left[\begin{split}
      - \!\!  & \! \sum_{t_0 \in T_0} \!\!  \tilde{\psi}_{s,k_x} (X_j(B_x;t_0)) \text{sgn} (Y'_j (B_y; t_0)) \frac{\partial Y_j(B_y;t_0)}{\partial y_{j,i}} \! \\
      + \!\!  & \! \sum_{t_1 \in T_1} \!\!  \tilde{\psi}_{s,k_x} (X_j(B_x;t_1)) \text{sgn} (Y'_j (B_y; t_1)) \frac{\partial Y_j(B_y;t_1)}{\partial y_{j,i}} \! \\
      - \!\!  & \int_0^1 \tilde{\psi}_{s,k_x} (X_j(B_x;t))  \phi_{s,k_x} (Y_j(B_y;t))  \frac{\partial Y'_j(B_y;t)}{\partial y_{j,i}} dt
    \end{split} \right],
\end{split}
\end{equation}
\begin{equation}
\begin{split}
\frac{\partial c_{s,k}^{(1,1)} (B;j)}{\partial x_{j,i}} =-2^s \int_0^1 & \psi_{s,k_x} (X_j(B_x;t)) \frac{\partial X_j(B_x;t)}{\partial x_{j,i}} \\
& \psi_{s,k_y} (Y_j(B_y;t)) Y'_j(B_y;t) dt,
\end{split}
\end{equation}
\begin{equation}
\label{eq:40}
\begin{split}
& \frac{\partial c_{s,k}^{(1,1)} (B;j)}{\partial y_{j,i}} \\
= & 2^s \!\! \left[\begin{split}
        \!\!  & \! \sum_{t_0 \in T_0} \!\!  \tilde{\psi}_{s,k_x} (X_j(B_x;t_0)) \text{sgn} (Y'_j (B_y; t_0)) \frac{\partial Y_j(B_y;t_0)}{\partial y_{j,i}} \! \\
      - 2\!\!  & \! \sum_{t_1 \in T_1} \!\!  \tilde{\psi}_{s,k_x} (X_j(B_x;t_1)) \text{sgn} (Y'_j (B_y; t_1)) \frac{\partial Y_j(B_y;t_1)}{\partial y_{j,i}} \! \\
      + \!\!  & \! \sum_{t_2 \in T_2} \!\!  \tilde{\psi}_{s,k_x} (X_j(B_x;t_2)) \text{sgn} (Y'_j (B_y; t_2)) \frac{\partial Y_j(B_y;t_2)}{\partial y_{j,i}} \! \\
      + \!\!  & \int_0^1 \tilde{\psi}_{s,k_x} (X_j(B_x;t))  \psi_{s,k_x} (Y_j(B_y;t))  \frac{\partial Y'_j(B_y;t)}{\partial y_{j,i}} dt
    \end{split} \right],
\end{split}
\end{equation}

Note that all these derivatives of Haar basis coefficients with respect to each geometrical parameter exist, and can be calculated analytically. Therefore, derivatives of the rasterization function $R_{MS} (W;x,y)$ with respect to the ge ometrical parameters could be also analytically calculated by substituting Equation~\ref{eq:29},\ref{eq:34}-\ref{eq:40} into Equation~\ref{eq:20} and Equation~\ref{eq:21} respectively.

Since there exist analytic derivatives for $R_{MS} (W;x,y)$ with respect to each geometrical parameter, the differentiability of $R_{MS} (W;x,y)$ is proved, which verifies Theorem~\ref{theorem:differentiability} in Section~\ref{sec:directly-optimizing-bezigon}.

\ifCLASSOPTIONcaptionsoff
  \newpage
\fi

\vfill


\end{document}